\newcommand{\pr}[1]{\Pr[#1]}
\newcommand{\simiid}{\overset{i.i.d.}{\sim}}
\newtheorem{lemma}{Lemma}
\newtheorem{corollary}{Corollary}
\newtheorem{proposition}{Proposition}
\theoremstyle{definition}
\newtheorem{definition}{Definition}
\theoremstyle{remark}
\newtheorem{remark}{Remark}
\begin{document}

\def\spacingset#1{\renewcommand{\baselinestretch}%
{#1}\small\normalsize} \spacingset{1}

\title{Fast calculation of p-values for one-sided Kolmogorov-Smirnov type statistics}
\author[1]{Amit Moscovich}
\fntext[fn1]{Department of Statistics and Operations Research,  Tel Aviv University, Tel Aviv, 69978,
Israel.}
\ead{mosco@tauex.tau.ac.il}

\begin{abstract}
    A novel method for computing exact p-values of one-sided statistics from the Kolmogorov-Smirnov family is presented.
    It covers the Higher Criticism statistic, one-sided weighted Kolmogorov-Smirnov statistics, and the one-sided Berk-Jones
statistics.
    In addition to p-values, the method can also be used for power analysis, finding alpha-level thresholds, and the construction
of confidence bands for the empirical distribution function.

    With its quadratic runtime and numerical stability, the method easily scales to sample sizes in the hundreds of thousands
and takes less than a second to run on a sample size of 25,000. This allows practitioners working on large data sets to use
exact finite-sample computations instead of approximation schemes.

The method is based on a reduction to the boundary-crossing probability of a pure jump stochastic process. FFT convolutions
of two different sizes are then used to efficiently propagate the probabilities of the non-crossing paths.
This approach has applications beyond statistics, for example in financial risk modeling.
\end{abstract}

\begin{keyword}
Continuous goodness-of-fit \sep
Higher criticism \sep
Stochastic process \sep
Boundary crossing \sep
Hypothesis testing
\end{keyword}

\maketitle

\spacingset{1.5} 

\section{Introduction}

Let $X_1, \ldots, X_n$ be random variables drawn independently from a distribution~$F$ and let $X_{(1)} \le X_{(2)} \le \cdots \le X_{(n)}$ be their order statistics.
In this paper, we present a fast and numerically stable algorithm for computing  one-sided non-crossing probabilities  of the form
\begin{align} \label{eq:upperbound_problem}
    \pr{\forall i: X_{(i)} \le \beta_i},
\end{align}
where $\beta_1, \ldots, \beta_n$ are (arbitrary) upper bounds.
This probability may be rewritten as
\begin{align} \label{eq:order_statistics_upperbound_problem}
    \text{NCPROB}(B_1, \ldots, B_n)
    &:=
    \pr{\forall i: U_{(i)} \le B_i \big| U_1, \ldots, U_n \simiid U[0,1]}.
\end{align}
where $B_i=F(\beta_i)$ and $U_{(1)} \le \cdots \le U_{(n)}$ are the order statistics of a uniform sample in~$[0,1]$.
This equivalence follows by  expressing $X_i$ using the inverse transformation $X_i = F^{-1}(U_i)$  where $F^{-1}(u) := \inf \{x \in \mathbb{R}: F(x)
\geq u \}$ is the generalized inverse distribution function and then noting that  $F^{-1}(U_{(i)}) \le \beta_i$  holds if and only if $U_{(i)} \le F(\beta_i)=B_i$.

A closely related problem, that is also covered by our algorithm, is the computation of one-sided non-crossing probabilities for the empirical cumulative distribution
function (eCDF).
Given a function $b:[0,1] \to \mathbb{R}$,
this is the probability that $b(t)$ bounds the empirical CDF from below,
\begin{align} \label{eq:pr_binomial_no_cross}
    \pr{\forall t \in [0,1]: b(t) \le nF_n(t) \big| U_1, \ldots, U_n \simiid U[0,1]}.
\end{align}
where $F_n(t) := \tfrac1n \sum_{i=1}^n \mathbf{1}(U_i \le t)$ is the eCDF of the sample $U_1, \ldots, U_n$.
The non-crossing probability~\eqref{eq:pr_binomial_no_cross} is equal to the non-crossing probability
$\text{NCPROB}(B_1, \ldots, B_n)$  with upper bounds given by the first integer crossings of the lower boundary function \citep{Gleser1985},
\begin{align} \label{def:B_i}
    B_{i} = \inf\{t \in [0,1] : b(t) > i-1 \}.
\end{align}
Hence, methods for computing the probability \eqref{eq:order_statistics_upperbound_problem} can be readily applied to the calculation of non-crossing probabilities for the empirical CDF.
See Figure \ref{fig:one_sided_crossing_illustration} for an illustration.
Conversely, given a set of upper bounds $B_1, \ldots, B_n$, one may construct a step function $b(t) = \sum_{i=1}^n \mathbf{1}(B_i \le t)$ for which  the non-crossing probability~\eqref{eq:pr_binomial_no_cross} is equal to NCPROB$(B_1, \ldots, B_n)$.
To conclude, the calculation of the probabilities  \eqref{eq:order_statistics_upperbound_problem} and \eqref{eq:pr_binomial_no_cross} are two different formulations of the same problem.
This equivalence is well-known in the literature and has also been extended to discontinuous distributions \citep{Steck1971,Gleser1985,Dimitrova2020a}.
Since it is fundamental to our algorithm description, we include a concise proof of this equivalence in \ref{appendix:reduction}.

\subsection{Outline}

The main contribution of this paper is a fast and numerically stable $O(n^2)$ algorithm for computing the 
\emph{one-sided non-crossing probabilities}~\eqref{eq:order_statistics_upperbound_problem} and \eqref{eq:pr_binomial_no_cross}.
In Section~\ref{sec:motivation} we describe the application of our method to Kolmogorov-Smirnov-type goodness of fit testing and list several other potential applications.
In Section \ref{sec:existing} we review the existing methods for computing one-sided and two-sided non-crossing probabilities.
In sections \ref{sec:background} and \ref{sec:method} we describe the proposed algorithm in detail. In Section~\ref{sec:benchmarks}
we apply our method to the computation of  $p$-values for a one-sided statistic by \cite{BerkJones1979} with sample sizes up to one million, demonstrating state-of-the-art performance.
The full source code is linked in Section~\ref{sec:code}.

\begin{figure}[t]
    \qquad\ \  
    \includegraphics[width=0.9\linewidth]{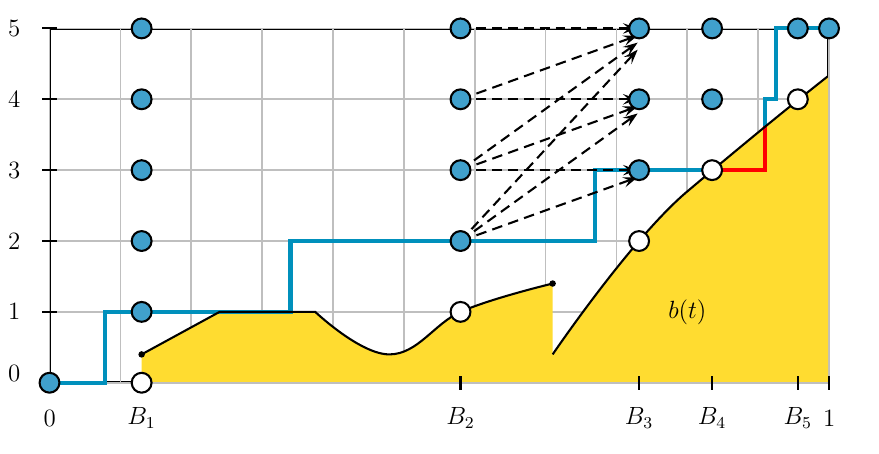}
    \caption{
       A one-sided lower boundary function $b(t)$ (in yellow) and a non-decreasing step function $f:[0,1] \to \{0, 1, \ldots\}$ (in blue) with increments at $U_1, \ldots, U_5 \in [0,1]$.
       The empty circles mark the first integer crossings of \(b(t)\) as defined in~\eqref{def:B_i}.
       By Lemma~\ref{lemma:reduction_to_finite} (\ref{appendix:reduction}),  $f(t)$ does not cross $b(t)$ if and only if for all $i$, $f(B_i) \ge i$, or equivalently, that the order statistics $U_{(1)} \le \cdots \le U_{(5)}$ all satisfy $U_{(i)} \le B_i$. 
       The blue circles  thus define a layer graph of possible non-crossing transitions for $f(t)$. In this example, $f(t)$ crosses $b(t)$ at $B_4$.
   }
   \label{fig:one_sided_crossing_illustration}
\end{figure}

\section{Motivation} \label{sec:motivation}
The primary motivation for this work is the computation of $p$-values and power for a large family of one-sided continuous
goodness-of-fit statistics.
Examples include the Higher-Criticism statistic \citep{DonohoJin2004}, one-sided variants of the Kolmogorov-Smirnov statistic
\citep{Kolmogorov1933,Renyi1953,Eicker1979,Jaeschke1979,MasonSchuenemeyer1983,JagerWellner2004}, variants of the one-sided
Berk-Jones statistics \citep{BerkJones1979,JagerWellner2005}, $\phi$-divergence statistics \citep{JagerWellner2007}, tests
based on local-levels \citep{FinnerGontscharuk2018}, and gGOF statistics \citep{ZhangJinWu2020}.
All of these one-sided statistics have the maximum form (or an equivalent minimum form)
\begin{align} \label{eq:generalized_gof}
    S := \max_{i=1,\ldots,n} s_i(F(x_{(i)})),
\end{align}
where $x_{(1)} \le \ldots \le x_{(n)}$ are the order statistics of a sample that, under the null hypothesis, is drawn from
a continuous distribution $F$, and $s_1, \ldots, s_n:\mathbb{R} \to \mathbb{R}$ are either all monotone increasing functions or all monotone decreasing functions.
For example, the one-sided Kolmogorov-Smirnov statistics are
\begin{align}
    D_n^+ := \max_{i=1, \ldots, n} \left( \frac{i}{n} - F(x_{(i)}) \right),
    \qquad
    D_n^- := \max_{i=1, \ldots, n} \left( F(x_{(i)}) - \frac{i-1}{n} \right).
\end{align}
Here, $D_n^+$ is a maximum over the monotone decreasing functions $s_i(u) = \tfrac{i}{n}-u$ and  $D_n^-$ is a maximum over monotone increasing functions $s_i(u)=u-\tfrac{i-1}{n}$.

Another example for a statistic of the form \eqref{eq:generalized_gof} is the Higher Criticism statistic of \cite{DonohoJin2004},
\begin{align}
    \text{HC}_n^* := \sqrt{n} \max_{1 \le i \le \alpha_0 \cdot n} \frac{\tfrac{i}{n} - F(x_{(i)})}{\sqrt{F(x_{(i)})(1-F(x_{(i)}))}}.
\end{align}
The HC$_n^*$ statistic can be viewed as a variant of the one-sided Kolmogorov-Smirnov statistic which takes
the maximum standardized deviation of the transformed order statistics $F(x_{(1)}), \ldots, F(x_{(n)})$ from their respective expectations.

Rather than maximizing over  standardized deviations, or Z-scores, of the transformed order statistics, one can instead consider the one-sided $p$-value of $F(x_{(i)})$ with respect to the null distribution of uniform order statistics $F(x_{(i)}) \sim \text{Beta}(i,n-i+1)$,  and take the minimum over all such $p$-values.
This is the one-sided $M_n^+$ statistic of \cite{BerkJones1979},
which has the minimum form (analogous to Eq. \eqref{eq:generalized_gof}),
\begin{align} \label{def:M_n_plus}
        M_n^+ := \min_{i=1,\ldots,n} s_i(F(x_{(i)})),
\end{align}
 where $s_i$ are the (monotone-increasing) CDFs of the corresponding Beta distributions,
 \begin{align}
    s_i(u)
    &:=
    \pr{\text{Beta}(i,n-i+1) < u}
    =
    \frac{n!}{(i-1)!(n-i)!}\int_0^u  t^{i-1} (1-t)^{n-i} dt.
\end{align}
In Section~\ref{sec:benchmarks} we present an application of our method for computing one-sided non-crossing probabilities to the computation of $p$-values for the $M_n^+$ statistic.

In the next subsections, we describe in detail how the computation of $p$-values and power for one-sided statistics of the
general form \eqref{eq:generalized_gof}  can be reduced to a calculation of the probability~\eqref{eq:order_statistics_upperbound_problem},
we discuss test statistic distribution inversion for obtaining $\alpha$-level thresholds and mention
 some applications that involve the non-crossing probability \eqref{eq:order_statistics_upperbound_problem}.

\begin{remark}
    An alternative to exact computation is the use of asymptotics.
    For the Higher Criticism, Berk-Jones, and some related statistics, the asymptotic distributions are known~\citep{Eicker1979,
    Jaeschke1979,WellnerKoltchinskii2003,MoscovichNadlerSpiegelman2016}.
    Unfortunately, the convergence of the null distribution to its limiting form can be exceedingly slow \citep{GontscharukLandwehrFinner2015},
    rendering the asymptotics inapplicable.
    More sophisticated approximations were developed (for example, by \cite{LiSiegmund2015}), but these are specific to a     particular statistic and the quality of their approximation is difficult to analyze.
    Exact finite-sample computations are generally preferable, provided that they are fast enough to be practical.
\end{remark}

\subsection[p-value and power calculations]{$p$-value and power calculations} \label{sec:pvalue}
Assume that a sample $x_1, \ldots, x_n$ is drawn independently from a continuous distribution $F$ and let $S$ be a statistic of the maximum form \eqref{eq:generalized_gof}.
Clearly, $S \le s$ if and only if $s_i(F(x_{(i)})) \le s$ for all $i$.
Since $s_i$ is monotone increasing, this occurs if and only if $F(x_{(i)}) \le s_i^{-1}(s)$.
The distribution of the statistic $S$ under the null hypothesis, that $X_i \simiid F$ is thus
\begin{align} 
    \pr{S \le s \big| X_i \simiid F}
    &=
    \pr{\forall i: F(X_{(i)}) \le s_i^{-1}(s) \big| X_i \simiid F}. \label{eq:pvalue}
\end{align}
Let $U_i = F(X_i)$, since  $F$ is continuous, we have that $U_i \sim U[0,1]$ and  $U_{(i)} = F(X_{(i)})$.
This means that \eqref{eq:pvalue} can be rewritten as
\begin{align}
    \pr{\forall i: U_{(i)} \le s_i^{-1}(s) \big| U_i \simiid U[0,1]}
    .
\end{align}
Thus the computation of the distribution of a maximum statistic $S$ as defined by Equation~\eqref{eq:generalized_gof} reduces to the calculation of the probability~\eqref{eq:order_statistics_upperbound_problem}
with $B_i=s_i^{-1}(s)$.
The $p$-value of the statistic $S$ is  given by
\begin{align}
    \text{p-value}(s) = \pr{S \ge s \big| X_i \simiid F} = 1-\text{NCPROB}(s_1^{-1}(s), \ldots, s_n^{-1}(s)).
\end{align}
Computing the power of such a statistic against a known alternative that $X_1, \ldots,
X_n \simiid G$ similarly reduces to Eq. \eqref{eq:order_statistics_upperbound_problem}, since in that case
\begin{align}
    \pr{S \ge s \big| X_i \simiid G}
    &=
    1-\pr{\forall i: s_i(F(X_{(i)})) < s \big| X_i \simiid G} \\
    &=
    1-\pr{\forall i: G(X_{(i)}) \le G(F^{-1}(s_i^{-1}(s))) \big| X_i \simiid G} \\
    &=
    1-\pr{U_{(i)} \le G(F^{-1}(s_i^{-1}(s))) \big| U_i \simiid U[0,1] } \\
    &=
    1 - \text{NCPROB}\Big(G\big(F^{-1}(s_1^{-1}(s))\big) , \ldots, G\big(F^{-1}(s_n^{-1}(s))\big)\Big).
\end{align}
For a more intricate analysis that considers distributions with discontinuities, see the analyses of \cite{Gleser1985,Dimitrova2020a}.

\subsection[Computation of alpha-level thresholds]{Computation of $\alpha$-level thresholds} \label{sec:threshold}
Given a test statistic $S$ of the maximum form in Eq.~\eqref{eq:generalized_gof}, how can we pick a threshold to obtain an $\alpha$-level test?
This is the threshold $s_{n,\alpha}$ that satisfies $\pr{S \ge s_{n,\alpha} \big| X_i \simiid F} = \alpha$.
Since the probability $\pr{S \ge s}$ is monotone-decreasing in $s$, a common approach is to  find $s_{n,\alpha}$  by repeated bisection,
thus inverting the cumulative distribution of the statistic $S$ numerically.
If  we know that $s_{n,\alpha} \in [a,b]$ then an approximation of $s_{n,\alpha}$ with additive error $<\epsilon$ may be
obtained using binary search.
This search involves $O(\log((b-a)/\epsilon ))$ calculations of probabilities of the form \eqref{eq:order_statistics_upperbound_problem}.
When the range of $s_{n,\alpha}$ is not known in advance, one can use a doubling search \citep{BentleyYao1976} to obtain an $\epsilon$-approximation
with $O(\log(s_{n,\alpha}/\epsilon))$ probability calculations of the form \eqref{eq:order_statistics_upperbound_problem}.
\subsection{Additional applications}
Additional applications which involve probabilities of the form \eqref{eq:order_statistics_upperbound_problem} and \eqref{eq:pr_binomial_no_cross}
 include the construction of confidence bands for empirical distribution functions and Q-Q plots \citep{Owen1995, Frey2008, Matthews2013, Weine2023},
 multiple hypothesis testing \citep{MeinshausenRice2006,RoquainVillers2011,SchroederDickhaus2020,MiecznikowskiWang2023},
change-point detection \citep{Worsley1986},
sequential testing \citep{Dongchu1998},
financial risk modeling \citep{DimitrovaIgnatovKaishev2017,Goffard2019},
genome-wide association studies \cite{SabattiEtal2009,BarnettMukherjeeLin2017,SunLin2019,Liu2022},
exoplanet detection~\citep{SulisMaryBigot2017},
cryptography~\citep{DingEtal2018},
econometrics~\citep{GoldmanKaplan2018},
and inventory management \citep{Dimitrova2020}.

\section{Existing methods} \label{sec:existing}

In this section, we review leading computational methods for evaluating non-crossing probabilities.
We begin with methods for computing one-sided non-crossing probabilities of the form~\eqref{eq:order_statistics_upperbound_problem}
 and then proceed to two-sided non-crossing probabilities of the form
\begin{align} \label{eq:two_sided}
    \pr{\forall i: b_i \le U_{(i)} \le B_i \big| U_1, \ldots, U_n \simiid U[0,1]}.
\end{align} 
Note that any algorithm for computing two-sided non-crossing probabilities is, in particular, applicable to the one-sided problem~\eqref{eq:order_statistics_upperbound_problem} by setting $b_i = 0$ for all $i$.

\subsection{One-sided boundaries}
Many methods for computing or estimating the one-sided non-crossing probability \eqref{eq:order_statistics_upperbound_problem} have
been proposed over the years.
One approach is to  repeatedly generate $X_1, \ldots, X_n \simiid F$
and measure the percentage of times that the inequalities $X_{(i)} \le \beta_i$ hold.
This Monte-Carlo approach does not yield accurate results and can be slow when the probability of interest is small and the
sample size \(n\) is large.

For the exact computation of the non-crossing probability \eqref{eq:order_statistics_upperbound_problem}, first note that
for each set of order statistics with no repetitions $X_{(1)} < X_{(2)} < \cdots < X_{(n)}$ there are exactly $n!$ instances
of  $(X_1, \ldots, X_n)$ that map to it. Let $U_i = F(X_i)$. For a continuous $F$ we have $U_i \simiid U[0,1]$, the density of the random vector $(U_1, \ldots, U_n)$ is equal to 1 on the unit cube. It follows that
the density of the sorted vector of order statistics $(U_{(1)}, \ldots, U_{(n)})$ is equal to $n!$ on the simplex that satisfies
$U_{(1)} < \cdots < U_{(n)}$ and zero elsewhere (we may ignore events of measure zero that $U_{(i)} = U_{(i+1)}$).
It follows that
\begin{align}
    \pr{\forall i: U_{(i)} \le B_i}
    &=
    \pr{\forall i: U_{(i)} < B_i} \\
    &=
    n! \pr{\forall i: U_i < B_i  \text{ and }  U_1 < U_2 < \cdots < U_n}.
\end{align}
This may be decomposed recursively as
\begin{align}
    &n! \pr{\forall i: U_i < B_i  \text{ and }  U_1 < U_2 < \cdots < U_n} \\
    &=
    n! \int_0^{B_1} d U_1 \pr{\forall i=2, \ldots, n: U_i < B_i \text{ and } U_1 < U_2 < \cdots < U_n | U_1}  \\
    &=
    n! \int_0^{B_1} d U_1 \int_{U_1}^{B_2} d U_2 \pr{\forall i=3, \ldots, n: U_i < B_i \text{ and } U_2 < U_3 < \cdots
< U_n | U_2}  \\
    &=
    \cdots
    =
    n! \int_0^{B_1} d U_1 \int_{U_1}^{B_2} d U_2 \int_{U_2}^{B_3} d U_3 \cdots \int_{U_{n-1}}^{B_{n}} d U_n. \label{eq:repeated_integral}
\end{align}
This recursion was first noted by \cite{WaldWolfowitz1939} who demonstrated the symbolic computation of the integral with
$n=6$ samples.
The integral \eqref{eq:repeated_integral} was analyzed by \cite{Durbin1973} for the case where the bounds $B_i$ increase
linearly, leading to a closed-form expression for the distribution of the one-sided Kolmogorov-Smirnov statistics.
More recently, \cite{MoscovichNadlerSpiegelman2016} developed a method for the numerical integration of \eqref{eq:repeated_integral} with computational cost $O(n^2)$.
That method was shown to be stable up to $n \approx 30,000$ using standard double-precision floating-point numbers.

Many other recursive formulas have been proposed for the calculation of one-sided non-crossing probabilities.
Of particular note is the formula in Proposition 3.2 of \cite{DenuitLefevrePicard2003}. 
This $O(n^2)$ recursive formula was first derived by \cite{NoeVandewiele1968} and used to tabulate percentage points of standardized
one-sided Kolmogorov-Smirnov statistics for sample sizes up to $n=100$.
Another $O(n^2)$ recursive procedure was proposed by \cite{KotelnikovaKhmaladze1983}.
A major limitation of these methods is that they contain sums of large binomial coefficients multiplied by very small numbers, leading to numerical instabilities.
Thus, using standard floating-point numbers, the methods become unstable for sample sizes beyond a few hundred (see Section 1 of \cite{KhmaladzeShinjikashvili2001}).
While it is possible to use variable precision floating-point numbers or rational arithmetic to alleviate the loss of numerical
accuracy \citep{BrownHarvey2008a,BrownHarvey2008b,SchroederDickhaus2020}, this approach incurs heavy runtime
penalties compared to the use of numerically stable methods that can use standard floating-point numbers.

\subsection{Two-sided boundaries}
For the computation of two-sided non-crossing probabilities of the form~\eqref{eq:two_sided},
several methods have been proposed \citep{Epanechnikov1968, Steck1971,Durbin1971,Noe1972,FriedrichSchellhaas1998,KhmaladzeShinjikashvili2001,MoscovichNadler2017}.
Unfortunately, all of these methods have a high computational cost of $O(n^3)$ with the exception
of the following:

\begin{itemize}
    \item The FFT-based algorithm of \cite{MoscovichNadler2017}, on which the current paper is based, has a running time of $O(n^2 \log n)$.
    \item The procedure of \citet{Durbin1971} is based on solving a system of linear
equations.
    While standard solutions are $O(n^3)$, using the Coppersmith-Winograd algorithm or related methods,
the theoretical asymptotic runtime is approximately $O(n^{2.373})$.
    However, such methods involve huge runtime constants and are not practical.
 
    \item It was noted by \cite{Miecznikowski2017}  that the determinant-based formula of \cite{Steck1971} can be computed in $O(n^2)$ thanks to the Hessenberg form of the matrix. However, due to a rapid loss of numerical accuracy, this approach is difficult to scale to large values of $n$. In a recent paper by \cite{Wang2022}, the authors compared seven variants of high-precision and rational arithmetic algorithms for computing Steck's determinant.
They demonstrated their approach to the task of computing $p$-values for the one-sided exact Berk-Jones statistic $M_n^+$ (see Eq. \eqref{def:M_n_plus}).
For the largest sample size that  they tested ($n=15,000$)  the running time of computing the probability \eqref{eq:two_sided} was 40 seconds.
In contrast, for the same sample size, the $O(n^2)$ method presented in this paper runs in $0.33$ seconds.
\end{itemize}
%

\section{Technical background} \label{sec:background}

We now describe the methods of \cite{FriedrichSchellhaas1998, KhmaladzeShinjikashvili2001,
MoscovichNadler2017} that form the basis of our algorithm.
These methods compute the two-sided non-crossing probability \eqref{eq:two_sided} given a set of lower and upper boundaries.
However, since the focus of this paper is on the one-sided case, our exposition describes these methods  in the simpler case of a one-sided boundary, where $b_i = 0$ for all $i$.

\subsection{Stepwise Binomial propagation} \label{sec:stepwise_binomial}
In this subsection, we describe a minor variant of ``scheme 1'' of \cite{FriedrichSchellhaas1998}, specialized  to the one-sided
boundary case.
Let $S(i,j)$ be the following probability,
\begin{align} 
    S(i,j)
    &:=
    \pr{n F_n(B_i) = j \text{ and } U_{(1)} \le B_1, U_{(2)} \le B_2, \ \cdots,\  U_{(i-1)} \le B_{i-1}}  \\
    &=
    \pr{n F_n(B_i) = j \text{ and } \forall \ell \in \{1,2,\ldots,i-1\}:  n F_n(B_\ell) \ge \ell}, \label{def:Sij}
\end{align}
where $U_{(1)} \le \cdots \le U_{(n)}$ are the order statistics of a sample $U_1, \ldots, U_n \simiid U[0,1]$.
Define
\begin{align}
    B_0 := 0 \ \text{ and }\  B_{n+1} := 1.
\end{align}
With this notation,
\begin{align}
    S(0, j) = \pr{n F_n(0) = j} = \delta_{0,j}.
\end{align}
Our quantity of interest is
\begin{align}
    S(n+1, n)
    &=
    \pr{ n F_n(1) = n \text{ and } \forall i \in \{1,\ldots,n\}:\ U_{(i)} \le B_i} \\
    &=
    \pr{\forall i \in \{1,\ldots,n\}:\ U_{(i)} \le B_i} \\
    &=
    \text{NCPROB}(B_1, \ldots, B_n).
\end{align}
We now explain how this quantity is computed using recursion relations.
The initial conditions are $S(0,j) = \delta_{0,j}$.
The transition probabilities are given by the following Chapman-Kolmogorov equations,
\begin{align} \label{eq:Sij}
    S(i+1,j)
    =
    \sum_{k = i}^n S(i, k) \cdot \pr{n F_n(B_{i+1}) = j \big| n F_n(B_i) = k}.
\end{align}
Note that the summation is done over $k \ge i$ to guarantee that we only sum over  non-crossing paths for which $n F_n(B_i) \ge i$, or equivalently that $U_{(i)} \le B_{i}$ (see  Figure~\ref{fig:one_sided_crossing_illustration} and \ref{appendix:reduction}).
The transition probability $k \to j$ is the probability that exactly $j-k$ of the points $U_1, \ldots, U_n$ fall in the interval $(B_i, B_{i+1}]$, conditioned on the fact that $k$ of them fell in the interval $[0, B_i]$.
This is given by the following  Binomial probability mass function,
\begin{align} \label{eq:prob_k_ECDF_jumps}
    \pr{n F_n(B_{i+1}) = j \big| n F_n(B_i) = k}
    &=
    \pr{\text{Binomial}(n-k, p_i) = j-k} \\
    &=
    {n-k \choose j-k} p_i^{j-k} (1-p_i)^{n-j}.
\end{align}
where $p_i :=  \pr{B_i < U \le B_{i+1}| U > B_i}$ with $U \sim U[0,1]$. Hence, $p_i = (B_{i+1}-B_i)/(1-B_i)$.
To compute the non-crossing probability $S(n+1, n)$, one can start by setting $S(1,j)$ for all $j$ via Eq. \eqref{eq:Sij},
then proceed
to compute $S(2,j)$ for all $j$, etc. at a total runtime cost of $O(n^3)$.
This procedure, which we dub \emph{Stepwise Binomial propagation}, is illustrated in Figure \ref{fig:one_sided_crossing_illustration}.
The filled circles represent elements of $S(i,j)$ with $j \ge i$ whereas the hollow circles $S(i,i-1)$ correspond to paths for which $n F_n(t)$ crosses the lower boundary at $B_i$.

\subsection{Stepwise Poisson propagation} \label{sec:stepwise_poisson}

There is a simple connection between the empirical CDF of an i.i.d. sample and a conditioned Poisson process:
\begin{lemma} \label{lemma:equiv}
    Let $U_1, \ldots, U_n \simiid U[0,1]$ be a sample and let $F_n(t) = \tfrac1n \sum_i \mathbf{1}(U_i
\le t)$ be its empirical CDF. The distribution of the process $n F_n(t)$ is identical to that of a Poisson process
$\xi_n(t)$ with intensity
$n$, 
    conditioned on $\xi_n(1) = n$.
\end{lemma}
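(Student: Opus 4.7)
The plan is to reduce the lemma to the classical fact that a homogeneous Poisson process on $[0,1]$ of rate $n$, conditioned on having exactly $n$ arrivals, has arrival times distributed as the order statistics of $n$ i.i.d.\ $U[0,1]$ random variables. Once that is in hand, both $n\hat F_n(t)$ and $\xi_n(t) \mid \xi_n(1) = n$ are right-continuous integer-valued counting processes on $[0,1]$, starting at $0$, taking unit jumps at a common distribution of jump locations, so their laws as processes must coincide.

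First I would compute the conditional joint density of the arrival times $T_1 < \cdots < T_n$ of $\xi_n$ given $\xi_n(1) = n$. Fix $0 < t_1 < \cdots < t_n < 1$ and consider disjoint infinitesimal intervals $(t_i, t_i + dt_i)$. Using the independent Poisson increments of $\xi_n$, the probability that there is exactly one arrival in each $(t_i, t_i + dt_i)$ and no arrivals in the complement of these intervals inside $[0,1]$ equals
\begin{equation*}
\prod_{i=1}^{n} \bigl(n\, dt_i\bigr) \cdot e^{-n} \;=\; n^n e^{-n}\, dt_1 \cdots dt_n,
\end{equation*}
to leading order. Dividing by $\Pr[\xi_n(1) = n] = n^n e^{-n}/n!$ gives a conditional joint density of $n!$ on the simplex $\{0 < t_1 < \cdots < t_n < 1\}$, which is exactly the joint density of the order statistics $(U_{(1)}, \ldots, U_{(n)})$ of $n$ i.i.d.\ $U[0,1]$ random variables.

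Next I would observe that, by hypothesis, $(X_{(1)}, \ldots, X_{(n)})$ has precisely this same order-statistic law, and that $n\hat F_n$ is, by construction, the right-continuous step function that starts at $0$ and jumps by $1$ at each $X_{(i)}$. Conditionally on $\xi_n(1) = n$, the process $\xi_n$ is likewise the right-continuous step function starting at $0$ and jumping by $1$ at each $T_i$. Since the vectors $(X_{(1)}, \ldots, X_{(n)})$ and $(T_1, \ldots, T_n \mid \xi_n(1) = n)$ have the same distribution, and each process is a deterministic measurable function of its jump-time vector, the two process laws are equal on $[0,1]$.

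The only delicate point is the bookkeeping at the last step: one has to verify that the map from the ordered jump-time vector to the full cadlag path of a unit-jump counting process is measurable and deterministic, so that equality in distribution of the finite-dimensional jump-time vectors genuinely upgrades to equality of the process laws. This is routine, so I do not anticipate any real obstacle beyond being careful with the infinitesimal computation above.
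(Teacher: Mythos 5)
Your proof is correct. Note that the paper does not prove this lemma at all: it simply cites Proposition 2.2 of Chapter 8 of Shorack and Wellner, which is precisely the classical ``order statistics property'' of the Poisson process that you derive. Your argument is the standard self-contained proof of that cited fact: compute the conditional joint density of the arrival times given $\xi_n(1)=n$, obtain $n!$ on the simplex $\{0<t_1<\cdots<t_n<1\}$, identify it with the density of the uniform order statistics, and then transfer equality of the jump-time laws to equality of the path laws via the deterministic map from jump times to unit-jump counting paths. The only cosmetic weakness is the infinitesimal-interval computation, which is heuristic as written; to make it rigorous you can instead use the exact joint density of $(T_1,\ldots,T_n)$ for a rate-$n$ Poisson process, namely $n^n e^{-n t_n}$ on the simplex, multiply by $e^{-n(1-t_n)}$ for the event of no arrivals in $(t_n,1]$, and divide by $\Pr[\xi_n(1)=n]=n^n e^{-n}/n!$, which gives the same answer $n!$ without any passage to the limit. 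With that substitution your argument is a complete and correct proof, arguably more informative to the reader than the paper's bare citation.
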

For the proof, see \citet[Prop 2.2, Ch. 8]{ShorackWellner2009}.
The calculation of the non-crossing probability in Eq. \eqref{eq:pr_binomial_no_cross} may thus be reduced to the calculation
of the non-crossing probability of a Poisson process $\xi_n$ with intensity $n$.
Let $Q(i,j)$ be the non-crossing-up-to-$B_i$ probabilities of $\xi_n$, defined in analogy to $S(i,j)$ in
Eq. \eqref{def:Sij},
\begin{align} \label{def:Q}
    Q(i,j) := \pr{\xi_n(B_i) = j  \text{ and } \forall \ell \in \{1,2,\ldots,i-1\}:\ \xi_n(B_\ell) \ge \ell}.
\end{align}
The recursion relations for all $i=0, \ldots, n+1$ and $j=0,\ldots,n$ mimic  those of $S(i,j)$,
\begin{align} 
    Q(0,j) &= \delta_{0,j}, \label{eq:Qij_forward_equations} \\
    Q(i+1,j) &= \sum_{k=i}^n Q(i, k) \cdot \pr{\xi_n(B_{i+1})=j \big| \xi_n(B_i) = k}, \label{eq:Qij_transition}
\end{align}
where the transition probabilities are now given by Poisson counts,
\begin{align} \label{eq:poisson_transition_probabilities}
    \pr{\xi_n(B_{i+1})=j \big| \xi_n(B_i) = k}
    &=
    \pr{\text{Pois}(n(B_{i+1}-B_{i})) = j-k} \\
    &=
    \frac{(n(B_{i+1}-B_{i}))^{j-k} e^{-n(B_{i+1}-B_{i})}}{(j-k)!}.
\end{align}
As before, in Equation \eqref{eq:Qij_transition} we sum  over $k \ge i$ to guarantee that we only consider the non-crossing paths for which $\xi_n(B_i) \ge i$. 
The algorithm based on this recursion, which we dub \emph{stepwise Poisson propagation} proceeds by computing $Q(1,j)$ for
all $j$, then $Q(2,j)$ for all $j$, etc.
Finally, by Lemma \ref{lemma:equiv},

\begin{align} \label{eq:binomial_noncrossing_problem}
    S(n+1, n)
    &=
    \pr{n F_n(B_1) \ge 1, \ \ldots,\  n F_n(B_{n}) \ge n} \\
    &=
    \pr{\xi_n(B_1) \ge 1, \ \ldots,\  \xi_n(B_{n}) \ge n | \xi_n(1) = n} \\
    &=
    \frac{\pr{\xi_n(B_1) \ge 1, \ \ldots,\  \xi_n(B_{n}) \ge n \text{ and } \xi_n(1) = n}}{\pr{\xi_n(1) = n}} \\
    &=
    \frac{\pr{\xi_n(B_1) \ge 1, \ \ldots,\  \xi_n(B_{n}) \ge n \text{ and } \xi_n(B_{n+1}) = n}}{\pr{\text{Pois}(n)=n}} \\
    &=
    \frac{Q(n+1,n)}{n^n e^{-n}/n!}.
\end{align}
This method was proposed by \citet{KhmaladzeShinjikashvili2001} for  two-sided boundary crossing probabilities.
It has the same $O(n^3)$ asymptotic running time as the stepwise Binomial propagation of \cite{FriedrichSchellhaas1998} which
we described in Section \ref{sec:stepwise_binomial}.

\subsection{Fourier-based stepwise Poisson propagation} \label{sec:mn2017}
In contrast to the Binomial propagation described in Section \ref{sec:stepwise_binomial}, in the stepwise Poisson propagation,
the transition probabilities $\pr{\xi_n(B_{i+1})=j \big| \xi_n(B_{i}) = k} $ in Eq. \eqref{eq:poisson_transition_probabilities} do not depend on $j$ or $k$ but only on their
difference.
This is due to the memorylessness property of the Poisson process.
As a result, the recurrence  \eqref{eq:Qij_transition} has the form of a linear convolution,
\begin{align} \label{eq:Q_iplus1_j}
        Q(i+1,j) &= \sum_{k=i}^n Q(i, k) \cdot \pr{\text{Pois} ( \lambda^{(i+1)} ) = j-k},
\end{align}
where $\lambda^{(i+1)} := n(B_{i+1}-B_{i})$ is the expected number of jumps of the Poisson process in the interval $(B_{i+1},B_i]$.
Let zero$({\bf v}, i)$ denote a copy of the vector $\bf v$ with the first $i$ elements set to zero,
and let $Q^{(i)} \in \mathbb{R}^{n+1}$ denote the vector
\begin{align} \label{eq:Q_vector}
    Q^{(i)} :=\left( Q(i,0),  Q(i,1),\ldots, Q(i,n) \right).
\end{align}
With this notation,
the vector $Q^{(i+1)}$ is given by a truncated linear convolution,
\begin{align} \label{eq:Q_next}
    Q^{(i+1)} &= \text{zero}(Q^{(i)}, i) \star \pi^{(i+1)},
\end{align}
where \( \pi^{(i+1)} := \left( \pr{\text{Pois}(\lambda^{(i+1)}) = 0}, \ldots, \pr{\text{Pois}(\lambda^{(i+1)}) = n} \right) \) is the Poisson PMF vector.
The zeroing operation is done to account for the fact that the summation in Eq. \eqref{eq:Q_iplus1_j} is performed only for $k \ge i$. 

Each of these linear convolutions can be computed efficiently in $O(n \log n)$ time steps using the fast Fourier transform
(FFT) and the circular convolution theorem for discrete signals \citep[Ch. 12, 13]{NumericalRecipes1992}.
The resulting procedure has a total running time of $O(n^2 \log n)$ and is numerically stable for large sample sizes using
standard double-precision (64-bit) floating-point numbers \citep{MoscovichNadler2017}.

This stepwise FFT-based procedure can also be used to compute the non-crossing probabilities for non-homogeneous
Poisson processes, negative binomial processes, and other types of stochastic jump processes, as well as non-crossing probabilities for discontinuous distributions~\citep{Dimitrova2020,Dimitrova2020a}.

%
%
\section{Proposed algorithm} \label{sec:method}

\label{sec:n2_one_sided}
In the previous section, we described how, for any $i$, one can obtain the non-crossing probabilities vector $Q^{(i+1)}$, defined
in Eq. \eqref{eq:Q_vector}, by computing a truncated linear convolution of $Q^{(i)}$ and the PMF of a Poisson random variable.
Starting from $Q^{(i)}$ for some $i$ and repeating this process $k$ times we obtain $Q^{(i)} \to Q^{(i+1)} \to \ldots \to Q^{(i+k)}$ in
$O(k n \log n)$ time.
In this section, we show how for any $k \in (\log n, n/\log n)$, it is possible to go directly from $Q^{(i)}$ to $Q^{(i+k)}$
using just $O(kn)$ steps. 
This makes the total runtime for computing  $Q^{(n+1)}$ be 
\(
    \left \lceil \tfrac{n+1}{k} \right \rceil O(kn) = O(n^2).
\)
The main idea behind our method is simple.
We first compute all the transition probabilities of the Poisson process $\xi_n$,
\begin{align}
    Q(i,j) \cdot \pr{\xi_n(B_{i+k}) = \ell | \xi_n(B_i)=j}, \label{eq:poisson_j_to_l}
\end{align}
from all non-zero elements $Q(i,j)$ at a cost of $O(n \log n)$ using a single convolution.
These transition probabilities  include the contributions of  non-crossing paths and also the contributions of crossing paths that
intersect the lower boundary in the interval $[B_i, B_{i+k})$.
All that remains is to subtract the contributions of the crossing paths.
Non-crossing paths satisfy $\xi_n(B_j) \ge j$
for all $j$. In contrast, a path that crosses the lower boundary inside the interval $[B_i, B_{i+k})$ must satisfy $\xi_n(B_j) < j$ for at least one index $j \in \{i, \ldots, i+k-1\}$.
With some careful accounting that we describe in the next section, we can efficiently compute the probability of having a
first crossing at each of these points and then subtract their individual contributions from the arrival probabilities in Eq.~\eqref{eq:poisson_j_to_l}.

\begin{definition}
    Let $f:[0,1] \to \{0,1,2, \ldots\}$ be a function. For every $i \in \{0, \ldots, n+1\}$, we define two logical predicates,    
    \begin{align}
        \text{NC}(f,i)
        &:=
        \forall \ell < i: f(B_{\ell}) \ge \ell, && \text{(no crossing before $B_i$)} \\
        \text{FC}(f,i)
        &:=
        \text{NC}(f,i) \text{ and } f(B_i) < i && \text{(first crossing at $B_i$)} \\
        &=
        \forall \ell < i: f(B_{\ell}) \ge \ell \text{ and } f(B_i) = i-1. \label{eq:FC}
    \end{align}
\end{definition}
\noindent Q satisfies by its definition \eqref{def:Q},
\begin{align}
    Q(i+k,j)
    &=
    \pr{\xi_n(B_{i+k}) = j  \text{ and } \text{NC}(\xi_n,i+k)}.
\end{align}

\begin{proposition} \label{prop:Q_star}
    Let $k > 0$ be some integer. Given the vector $Q^{(i)}$ as defined in Eq. \eqref{eq:Q_vector}, we can compute the probabilities
    \(
        \pr{\xi_n(B_{i+k}) = j \text{ and } \emph{NC}(\xi_n, i)}
    \)
    for all $j$ in $O(n \log n)$ time.
\end{proposition}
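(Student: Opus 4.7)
The plan is to reduce the computation to a single linear convolution of $Q^{(i)}$ with a Poisson PMF vector, then use the FFT to evaluate that convolution in $O(n \log n)$ time. Two observations drive the argument. First, the event $\text{NC}(\xi_n, i)$ is determined solely by $\xi_n(B_1), \ldots, \xi_n(B_{i-1})$, so it is measurable with respect to the restriction of $\xi_n$ to $[0, B_i]$. Second, $\xi_n$ is a Poisson process of intensity $n$, so its increment $\xi_n(B_{i+k}) - \xi_n(B_i)$ is distributed as $\mathrm{Pois}(n(B_{i+k}-B_i))$ and, by the independence of Poisson increments, is independent of $\xi_n|_{[0, B_i]}$.

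Conditioning on the intermediate value $\xi_n(B_i) = m$ and combining these two facts yields
\[
    \pr{\xi_n(B_{i+k}) = j \text{ and } \text{NC}(\xi_n, i)} = \sum_{m} Q(i, m) \cdot \pi_{j-m},
\]
where $\pi_\ell := (n(B_{i+k}-B_i))^\ell e^{-n(B_{i+k}-B_i)}/\ell!$ is the PMF of the increment. The right-hand side is a truncated linear convolution of the length-$O(n)$ vector $Q^{(i)}$ (zero-padded outside its support) with the length-$(n+1)$ Poisson PMF vector $(\pi_0, \pi_1, \ldots, \pi_n)$, so all desired probabilities are produced simultaneously from a single convolution.

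The remaining step is standard and is exactly what is used in Section \ref{sec:mn2017}: a linear convolution of two length-$O(n)$ sequences reduces, after zero padding, to a circular convolution of length $O(n)$, which is computed via three FFTs of that length in $O(n \log n)$ operations by the circular convolution theorem. I do not anticipate a substantive obstacle. The only point of care is the range of $m$ in the summation, since $\text{NC}(\xi_n, i)$ does not constrain $\xi_n(B_i)$ and thus permits $\xi_n(B_i) = i-1$, bringing $Q(i, i-1)$ into the sum even though it is not among the entries of $Q^{(i)}$ as defined in Eq. \eqref{eq:Q_vector}. This extra entry can be produced from $Q^{(i-1)}$ in $O(n)$ time via one additional evaluation of Eq. \eqref{eq:Qij_transition} and tracked alongside $Q^{(i)}$ at no asymptotic cost, so the $O(n \log n)$ bound is unaffected.
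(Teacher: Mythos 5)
Your argument is essentially the paper's own proof: both decompose the event according to the value of $\xi_n(B_i)$ using independence of Poisson increments, recognize the resulting sum $\sum_m Q(i,m)\,\pr{\text{Pois}(n(B_{i+k}-B_i)) = j-m}$ as a truncated linear convolution of $Q^{(i)}$ with a Poisson PMF, and invoke the FFT-based convolution of Section \ref{sec:mn2017} for the $O(n\log n)$ bound. Your closing caveat about the $m=i-1$ term is a genuine subtlety that the paper's displayed sum (which starts at $m=i$) glosses over --- the event $\text{NC}(\xi_n,i)$ does admit $\xi_n(B_i)=i-1$, and the bookkeeping in Proposition \ref{prop:Q_t_iplusk_from_Q_t_i}, which subtracts the $\text{FC}(\xi_n,i)$ contribution, presumes that term is present --- and your fix of carrying $Q(i,i-1)$ alongside $Q^{(i)}$ (equivalently, one could keep the sum starting at $m=i$ and begin the first-crossing subtraction at $j=i+1$) is correct and costs nothing asymptotically.
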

\begin{proof}

\begin{align}
   \pr{\xi_n(B_{i+k}) = j \text{ and } \emph{NC}(\xi_n, i)}
    =
    \sum_{i \le k \le j} Q(B_i, k) \cdot\pr{\text{Pois}(n(B_{i+k}-B_i)) = j-k}.
\end{align}
These values, as a function of $j$, are a truncated linear convolution of $Q^{(i)}$ and the PMF of
a Poisson random variable with intensity $n(B_{i+k}-B_i)$.
As explained in Section \ref{sec:mn2017}, this convolution can be computed in $O(n \log n)$ steps.
\end{proof}
\begin{proposition} \label{prop:firstcross_and_m}
    Given $Q^{(i)}$ the probabilities  $\pr{\text{FC}(\xi_n, j) \text{ and } \xi_n(B_{i+k}) = \ell}$ for all values of $\ell \in \{i+k, \ldots, n\}$ and $j \in \{i, \ldots,i+k-1\}$ can be computed in $O(k^2 \log k + nk)$ time.
\end{proposition}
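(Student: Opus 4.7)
The plan is to use the memoryless property of the Poisson process to factor each target probability into a first-crossing probability multiplied by a Poisson transition, and then to compute the first-crossing probabilities via a ``miniature'' stepwise Poisson propagation on vectors of length $O(k)$.

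Since $\text{FC}(\xi_n, j)$ is measurable with respect to the history of $\xi_n$ on $[0, B_j]$ and pins $\xi_n(B_j) = j - 1$, the independent-increments property gives
\begin{align}
\pr{\text{FC}(\xi_n, j) \text{ and } \xi_n(B_{i+k}) = \ell}
= F(j) \cdot \pr{\text{Pois}(n(B_{i+k} - B_j)) = \ell - j + 1},
\end{align}
where $F(j) := \pr{\text{FC}(\xi_n, j)} = Q(j, j-1)$. The task thus reduces to computing the $k$ scalars $F(i), \ldots, F(i+k-1)$ and multiplying them by tabulated Poisson PMFs.

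Next, the recursion \eqref{eq:Qij_transition} applied to $Q(j, j-1)$ collapses to a single term (the summation index must simultaneously be $\ge j-1$, by the NC constraint, and $\le j-1$, by nonnegativity of the Poisson increment), yielding
\begin{align}
F(j) = Q(j-1, j-1) \cdot e^{-n(B_j - B_{j-1})} \qquad (j \ge i+1).
\end{align}
So it suffices to obtain the diagonal entries $Q(i+m, i+m)$ for $m = 0, \ldots, k-2$. I will propagate the truncated vector $U^{(m)} := (Q(i+m, v))_{v = i+m}^{i+k-1}$ of length at most $k$, initialized as the first $k$ entries of $Q^{(i)}$, via the restricted recursion
\begin{align}
U^{(m+1)}(v) = \sum_{v' = i+m}^{v} U^{(m)}(v') \cdot \pr{\text{Pois}(n(B_{i+m+1} - B_{i+m})) = v - v'}.
\end{align}
This is a truncated linear convolution of two vectors of length $O(k)$, computable in $O(k \log k)$ time via FFT as in Section \ref{sec:mn2017}; the $k-1$ such propagations contribute $O(k^2 \log k)$. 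The boundary value $F(i) = Q(i, i-1)$ is not literally an entry of $Q^{(i)}$ but is available as the element discarded when $Q^{(i)}$ was formed by truncating a convolution in the preceding stage of the outer algorithm (and it is zero at the very first stage, since $\xi_n \ge 0$).

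Finally, I tabulate the $k$ Poisson PMFs $m \mapsto \pr{\text{Pois}(n(B_{i+k} - B_j)) = m}$ for $j \in \{i, \ldots, i+k-1\}$ and $m = 0, \ldots, n$, using the one-term recursion $\pi(m+1) = \pi(m) \cdot \lambda / (m+1)$ at cost $O(n)$ per rate, then multiply by $F(j)$ to assemble all $O(nk)$ output values; the total runtime is $O(k^2 \log k) + O(nk) + O(nk) = O(k^2 \log k + nk)$. The main obstacle is justifying that the length-$O(k)$ truncated vectors $U^{(m)}$ retain enough information to recover the diagonal entries $Q(i+m,i+m)$: sample paths contributing to these entries cannot exceed the value $i+k-1$ at any intermediate sample point, which is immediate from the monotonicity of $\xi_n$ and the fact that $\xi_n(B_{i+m}) = i+m \le i+k-1$. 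The Markov-property factorization is precisely what enables this truncation, by decoupling the short window computation from the long tail of $Q^{(i)}$ beyond index $i+k-1$.
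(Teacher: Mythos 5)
Your proof is correct and takes essentially the same route as the paper's: factor each target probability via the Markov/independent-increments property into $\pr{\text{FC}(\xi_n,j)}$ times a Poisson transition, reduce $\pr{\text{FC}(\xi_n,j)}$ to $Q(j-1,j-1)\,e^{-n(B_j-B_{j-1})}$, compute the needed diagonal entries by stepwise Poisson propagation restricted to an $O(k)\times O(k)$ window with FFT in $O(k^2\log k)$, and assemble the $O(nk)$ output values. Your explicit argument that the window truncation loses nothing (paths ending at level $\le i+k-1$ never exceed it earlier) and your carrying over of the boundary value $Q(i,i-1)$ from the previous stage merely spell out details the paper's proof leaves implicit in its claim about the $k\times k$ subarray.
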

\begin{proof}
We first note that for every $j \ge 1$, by Eq. \eqref{eq:FC},
\begin{align} \label{eq:firstcross_Q}
    \text{FC}(\xi_n,j)
    =
    \forall \ell < j-1: f(B_{\ell}) \ge \ell \text{ and } f(B_{j-1}) = f(B_j) = j-1.
\end{align}
By the chain rule, we have
\begin{align} 
    &\pr{\text{FC}(\xi_n,j)} \\
    &=
    \pr{\forall \ell < j-1: f(B_{\ell}) \ge \ell \text{ and } f(B_{j-1}) = j-1}
    \cdot \pr{f(B_j) = j-1 | f(B_{j-1}) = j-1} \nonumber \\
    &=
    Q(j-1,j-1) \cdot \pr{\text{Pois}\left(n(B_j - B_{j-1})\right) = 0}.
\end{align}
From the definition of FC, if FC$(\xi_n,j)$ then $\xi_n(B_j) = j-1$, hence by the memorylessness of the Poisson process,
\begin{align}
    \pr{\xi_n(B_{i+k}) = \ell \vert \text{FC}(\xi_n, j)} 
    &=
    \pr{\xi_n(B_{i+k})=\ell | \xi_n(B_j) = j-1}
    \\
    &=
    \pr{\text{Pois}(n(B_{i+k} - B_j)) = \ell-(j-1)}.
\end{align}
Putting it all together, we have
\begin{align}
    &\pr{\text{FC}(\xi_n, j) \text{ and } \xi_n(B_{i+k}) = \ell}
    =
    \pr{\text{FC}(\xi_n, j)} \cdot \pr{\xi_n(B_{i+k}) = \ell | \text{FC}(\xi_n, j)} \\
    &=
    Q(j-1,j-1) \cdot \pr{\text{Pois}\left(n(B_j - B_{j-1})\right) = 0} \cdot \pr{\text{Pois}(n(B_{i+k} - B_j)) = \ell-j+1}.
\nonumber
\end{align}
Evaluating this probability for all $j \in \{i, \ldots, i+k-1\}$
and
 $\ell \in \{i+k, \ldots, n\}$
takes a total of $O(nk)$ time.
As for the computation of $Q(j-1, j-1)$ for all $j \in \{i, \ldots, i+k-1\}$, note that  $Q(j,\ell)$ for all $j,\ell \in \{i-1, \ldots, i+k-1\}$ is a $k \times k$ sub-array that can be computed in time $O(k^2 \log k)$ using the FFT-based algorithm described in Section \ref{sec:stepwise_poisson}.
\end{proof}
\begin{proposition} \label{prop:Q_t_iplusk_from_Q_t_i}
    Given $Q^{(i)}$, one can compute $Q^{(i+k)}$ in $O(n \log n + nk +\ k^2 \log k)$ time.
\end{proposition}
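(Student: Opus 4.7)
The plan is to express $Q^{(i+k)}$ as a corrected version of the unconstrained convolutional propagation from step $i$ to step $i+k$. Observe that the event $\text{NC}(\xi_n, i)$ decomposes as the disjoint union of $\text{NC}(\xi_n, i+k)$ together with the events $\{\text{FC}(\xi_n, j)\}$ for $j \in \{i, i+1, \ldots, i+k-1\}$: either the path never crosses before $i+k$, or it has its first crossing at some unique index in between. Therefore, for each $\ell \in \{i+k, \ldots, n\}$,
\begin{align}
Q(i+k, \ell)
&=
\pr{\xi_n(B_{i+k}) = \ell \text{ and NC}(\xi_n, i+k)} \\
&=
\pr{\xi_n(B_{i+k}) = \ell \text{ and NC}(\xi_n, i)}
-
\sum_{j=i}^{i+k-1} \pr{\text{FC}(\xi_n, j) \text{ and } \xi_n(B_{i+k}) = \ell}.
\end{align}

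First, I would invoke Proposition \ref{prop:Q_star} to compute the minuend $\pr{\xi_n(B_{i+k}) = \ell \text{ and NC}(\xi_n, i)}$ simultaneously for all $\ell$ in $O(n \log n)$ time; this is a single truncated linear convolution of $Q^{(i)}$ with the Poisson PMF of parameter $n(B_{i+k} - B_i)$, evaluated via the FFT. Next, I would invoke Proposition \ref{prop:firstcross_and_m} to tabulate the $k \times (n - i - k + 1)$ array of first-crossing contributions $\pr{\text{FC}(\xi_n, j) \text{ and } \xi_n(B_{i+k}) = \ell}$ in $O(k^2 \log k + nk)$ time. Finally, for each $\ell$ I would sum the $k$ correction terms and subtract from the minuend, a further $O(nk)$ work.

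Summing the three costs yields the claimed $O(n \log n + nk + k^2 \log k)$ bound, and the values produced are exactly the entries of $Q^{(i+k)} = (Q(i+k, i+k), \ldots, Q(i+k, n))$.

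The main conceptual step is the inclusion–exclusion style decomposition; given that, the two subroutines (Propositions \ref{prop:Q_star} and \ref{prop:firstcross_and_m}) already encapsulate the delicate bookkeeping, so the remaining obstacle is purely the arithmetic of verifying that the three running-time contributions add up as stated and that the range of summation indices is handled correctly (in particular that the first-crossing events are genuinely disjoint, which follows from the clause $f(B_{j-1}) = j-1$ embedded in the definition of FC via Eq. \eqref{eq:firstcross_Q}).
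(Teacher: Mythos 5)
Your proposal is correct and follows essentially the same route as the paper: the same disjoint decomposition of $\text{NC}(\xi_n,i)$ into $\text{NC}(\xi_n,i+k)$ plus the first-crossing events $\text{FC}(\xi_n,j)$ for $j=i,\ldots,i+k-1$, the same subtraction identity for each value of $\xi_n(B_{i+k})$, and the same accounting of costs via Propositions \ref{prop:Q_star} and \ref{prop:firstcross_and_m} plus the final $O(nk)$ subtraction step.
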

\begin{proof}
If the predicate NC$(f,i)$ holds then either NC$(f,i+k)$ or FC$(f,j)$ for exactly one of $j \in \{i,\ldots,i+k-1\}$.
Hence for a Poisson process $\xi_n(t)$
\begin{align} \label{eq:just_some_eq}
    \pr{\text{NC}(\xi_n, i)} = \pr{\text{NC}(\xi_n, i+k)} + \sum_{j=i}^{i+k-1} \pr{\text{FC}(\xi_n, j)}.
\end{align}
This equality holds  even when we add the constraint that $\xi_n(B_{i+k}) = \ell$.
Adding this constraint and subtracting the sum on the RHS of Eq. \eqref{eq:just_some_eq} from both sides, we get,
\begin{align}
    \label{eq:conditioned_nocross}
    &\pr{\text{NC}(\xi_n, i+k) \text{ and } \xi_n(B_{i+k}) = \ell} \nonumber \\
    &= \underbrace{\pr{\text{NC}(\xi_n, i) \text{ and } \xi_n(B_{i+k}) = \ell}}_{(*)} - \sum_{j=i}^{i+k-1} \underbrace{\pr{\text{FC}(\xi_n,
j) \text{ and } \xi_n(B_{i+k}) = \ell}}_{(**)} .
\end{align}
By Proposition \ref{prop:Q_star} the probabilities (*) can be computed in time $O(n \log n)$
and by Proposition \ref{prop:firstcross_and_m}  the probabilities (**) are computable in time $O(nk + k^2 \log k)$.
Evaluating \eqref{eq:conditioned_nocross} costs $O(nk)$.
The total running time of computing $Q^{(i+k)}$ given $Q^{(i)}$ is therefore $O(n \log n + nk + k^2 \log k)$.
\end{proof}

We can now put it all together.
    Starting from $Q^{(0)} = (1,0,0,\ldots,0)$, we compute $Q^{(k)}$ 
    and then compute $Q^{(2k)}$, $Q^{(3k)}$, etc., until we reach $Q^{(n+1)}$. By Proposition \ref{prop:Q_t_iplusk_from_Q_t_i},
each of
these steps takes $O(n \log n +\ nk + k^2 \log k)$ time.
    Thus the total running time is
    \begin{align} \label{eq:runtime_given_k}
        \left \lceil \tfrac{n+1}{k} \right \rceil  O \left( n \log n +\ nk + k^2 \log k \right)
        =
        O \left( \tfrac{n^2 \log n}{k} + n^2 + nk \log k \right).
    \end{align}
For any choice of $k \in (\log n, n / \log n)$, the running time is $O(n^2)$.

\section{Benchmarks} \label{sec:benchmarks}
In this section, we test the running time and accuracy of our method.
The application chosen here is the computation of $p$-values for the $M_n^+$ one-sided statistic of \cite{BerkJones1979} as defined in Eq. \eqref{def:M_n_plus}.
Following the work of \cite{DonohoJin2004}, the Higher Criticism and Berk Jones statistics have attracted renewed
interest due to their optimality with respect to various sparse signal detection problems~\citep{HallJin2010,AriasCastroCandesPlan2011,LiSiegmund2015,AriasCastroHuangVerzelen2020,PorterStewart2020,ZhangJinWu2020,Kipnis2022}.
In particular, the $M_n^+$ and closely related $R_n^+$ statistics of \cite{BerkJones1979} have been applied to inference tasks in various domains, including survival analysis, astrophysics, genetics, and social network anomaly detection \citep{Owen1995,SulisMaryBigot2017,SunLin2019,Zhang2022,Matthews2013,Cadena2019}.

For each sample size $n$, we first computed an $\alpha$-level threshold $s_{n,\alpha}$ with the bisection method described
in Section~\ref{sec:threshold} for $\alpha=5\%$. The bounds $B_i=s_i^{-1}(s)$ were computed using the \texttt{betaincinv} function, which computes the inverse of the CDF of a Beta random variable. The probability  $\pr{M_n^+ < s_{n,\alpha}}$ was calculated using a single  NCPROB evaluation as described in Section~\ref{sec:pvalue}. The following methods for computing NCPROB were tested:
\begin{itemize}
    \item {\bf KS (2001):} the $O(n^3)$ two-sided algorithm of \cite{KhmaladzeShinjikashvili2001}, described in Section \ref{sec:stepwise_poisson}.
    \item {\bf MNS (2016):} the $O(n^2)$ one-sided algorithm of \cite{MoscovichNadlerSpiegelman2016} mentioned in Section
\ref{sec:existing}.
    \item {\bf MN (2017):} the $O(n^2 \log n)$ two-sided algorithm of \cite{MoscovichNadler2017}, described in Section \ref{sec:mn2017}.
    \item {\bf New:} the $O(n^2)$ one-sided algorithm described in this paper.
\end{itemize}
Figure \ref{fig:benchmarks} shows the running times for the sample sizes $n=5000,\ 10000, \ldots, 100,000$ (best out of 3 runs).
The Poisson-propagation-based methods {\bf KS\,(2001)/MN\,(2017)/New} all produce the same results in the tested range, with
relative errors less than $10^{-10}$ using standard double-precision (64-bit) floating-point numbers.
In contrast, {\bf MNS (2016)} is only accurate up to about $n=30,000$. For the sample size $n=35,000$ it produces a relative error of $7\%$
and for $n>50,000$ it breaks down completely. Therefore, we did not test the running time of {\bf MNS (2016)} for sample
sizes larger than $30,000$.

\begin{figure}
    \vspace*{-12mm}
        \hspace*{-2.9mm}
    \includegraphics[width=15.03cm]{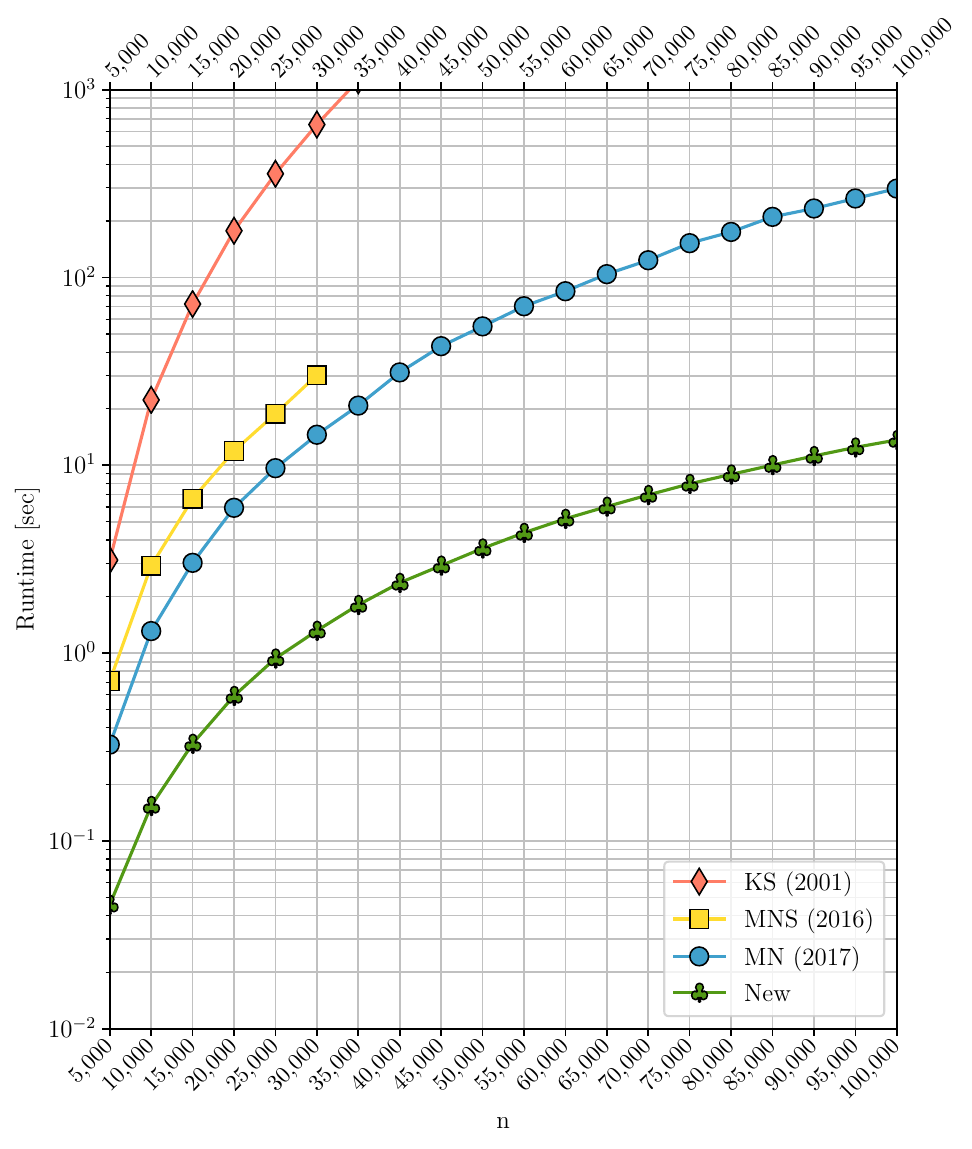}
    \vspace*{-12mm}
    \caption{\label{fig:benchmarks}
    Running times for computing the $p$-value of a one-sided goodness-of-fit statistic. The times shown are the best out of three runs.
Note the logarithmic y-axis.
} 
\end{figure}
\begin{figure}
    \vspace*{-12mm}    
    \includegraphics[width=15cm]{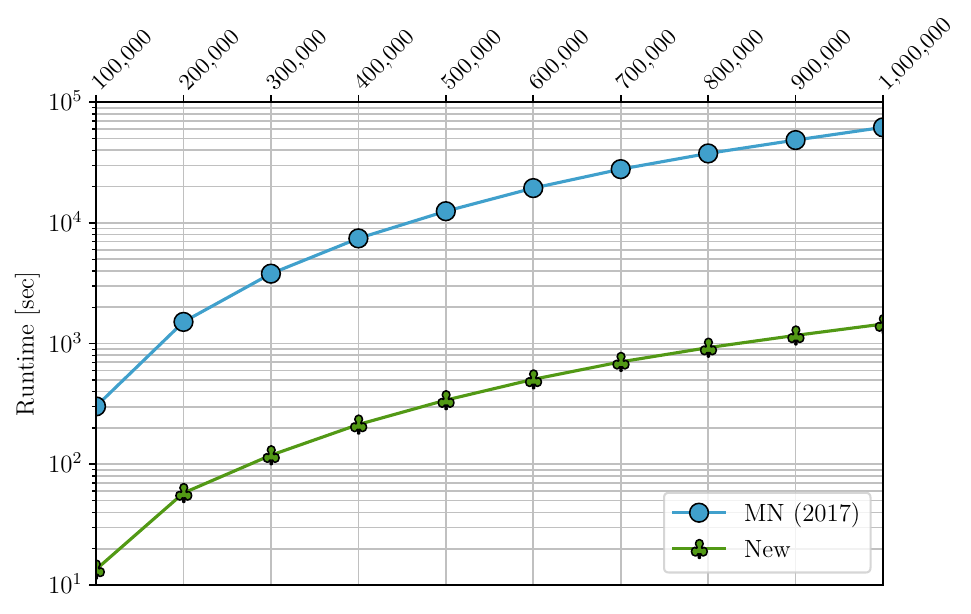}
    \hspace*{-0.1mm}
    \includegraphics[width=13.88cm]{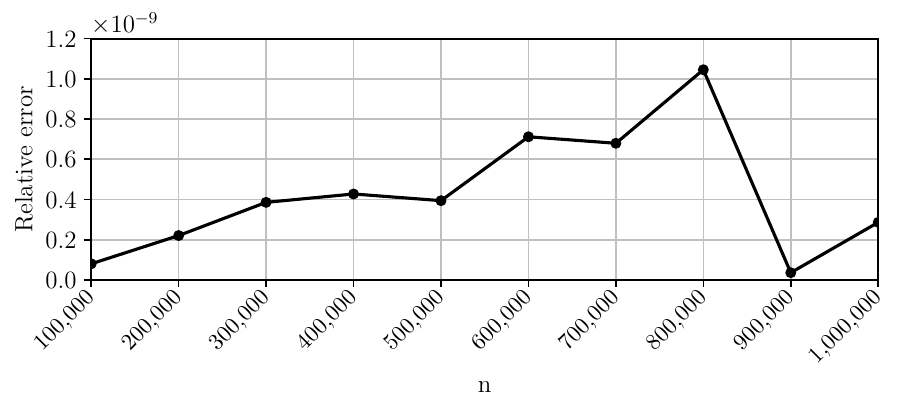}
    \caption{Large scale benchmarks for computing the $p$-value of a one-sided goodness-of-fit statistic. \qquad {\bf (top)}~Running times of our method vs. \cite{MoscovichNadler2017}. Note the logarithmic
y-axis. {\bf (bottom)} Relative numerical errors of the $p$-values computed by the two methods. The maximum relative numerical error is about $10^{-9}$.   \label{fig:benchmarks_large}}
\end{figure}

An additional set of  large-scale benchmarks is shown in Figure \ref{fig:benchmarks_large}.
This time, due to the long running times involved, we only performed a single measurement for every data point (rather than
taking the best out of 3 runs) and used a fixed threshold for all sample sizes, chosen to be the $\alpha$-level threshold
for $n=100,000$ with $\alpha=5\%$.
This figure does not show benchmarks for {\bf KS (2001)} due to its excessive running time for  large sample sizes.
In the bottom panel, we show the relative difference between the boundary-crossing probabilities computed using {\bf MN (2017)} and {\bf New}.
This relative error is small throughout the tested range.
See \ref{sec:implementation} for additional details on our implementation and benchmarks. 

\begin{remark}
    A different choice of test statistic should yield very similar running times.
    The chosen test statistic and threshold merely determine the bounds $B_1, \ldots, B_n$, but the running time does not typically depend on their particular values.
One exception is the case where there are multiple repeating bounds (e.g. $B_1=B_2$), which we optimized for.
\end{remark}

\subsection{Code availability} \label{sec:code}
A  C++ implementation of the tested methods  for computing one-sided and two-sided
boundary crossing probabilities of the form
\eqref{eq:order_statistics_upperbound_problem} and \eqref{eq:two_sided} is provided at the following link:

\url{https://github.com/mosco/crossing-probability}

\noindent This repository also includes a Python language wrapper, as well as code for running the benchmarks and creating the figures
in Section~\ref{sec:benchmarks}.

\section{Conclusion}

Given a set of bounds $B_1, \ldots, B_n \in [0,1]$, this paper presents a new $O(n^2)$ method for the calculation of the  non-crossing probability
\begin{align}
    \text{NCPROB}(B_1, \ldots, B_n) := \pr{\forall i: U_{(i)} \le B_i \big},
\end{align}
where $U_{(1)} \le \cdots \le U_{(n)}$ are the order statistics of a uniform draw in the unit interval.
The fast calculation of these probabilities has many applications, in particular for sparse signal detection, goodness-of-fit testing, financial risk modeling, and the construction of one-sided confidence bands for the empirical distribution function.

We have applied our method to the computation of $p$-values for a one-sided goodness-of-fit statistic of \citet{BerkJones1979} and compared its running time to other leading methods, with sample sizes as large as one million. For all sample sizes, our method is shown to be the fastest one available by a wide margin.

\subsection*{Acknowledgments}
Some of this research was done while the author was a postdoctoral research associate at the Program in Applied and Computational Mathematics (PACM), Princeton University.
The author is supported by an Israel Science Foundation grant (1662/22). 

\appendix

\section[Appendix A. Reduction of the continuous boundary crossing problem to a discrete set of inequalities]{Reduction of the continuous boundary crossing problem to a discrete set of inequalities} \label{appendix:reduction}
%
Let $U_1, \ldots, U_n \sim U[0,1]$ be a sample of independent uniform variables
with order statistics $U_{(1)} \le \cdots \le U_{(n)}$ and empirical cumulative distribution function $F_n(t) = \tfrac{1}{n}
\sum_{i=1}^n 1(U_i \le t)$.
In this appendix, we present the reduction between the non-crossing probability of the empirical cumulative distribution,
\begin{align} \label{eq:continuous_noncrossing_prob}
    \pr{\forall t \in [0,1]: b(t) \le n F_n(t)},
\end{align}
and the simultaneous non-crossing probability of the order statistics,
\begin{align} \label{eq:order_statistics_upperbound_problem_2}
    \text{NCPROB}(B_1, \ldots, B_n)
        &= \pr{\forall i: U_{(i)} \le B_i }.
\end{align}
This reduction is well-known and has also been extended to discontinuous distributions \citep{Steck1971,Gleser1985,Dimitrova2020a}.
Nonetheless, we thought it would benefit the reader to include a concise proof of this basic result, which is at the foundation
of the methods described in this paper.
First, we show that, rather than considering the entire boundary function $b(t)$, it suffices to consider its first integer
passage times,
\begin{align} \label{eq:t_i_def}
    B_{i} &:= \inf\{t \in [0,1] : b(t) > i-1 \}, \qquad i=1,\ldots,k
\end{align}
where $k$ is the largest integer for which the set $\{t : b(t) > k-1 \}$ is non-empty.
The following lemma holds the key observation that allows one to replace the infinite set of inequality constraints $\forall
t \in [0,1]:\ b(t) \le nF_n(t)$ with a finite set of inequalities.
\begin{lemma} \label{lemma:reduction_to_finite}
    Let $f:[0,1] \to \{0,1,2, \ldots\}$ be a non-decreasing right-continuous  function and let $b:[0,1] \to \mathbb{R}$
be a function  with  first integer crossings $B_1, \ldots, B_k$, then
\[
    \forall i: f(B_i) \ge i
    \quad
    \Longleftrightarrow
    \quad
    \forall t : b(t) \le f(t).
\]
\end{lemma}
\begin{proof}
\noindent ($\Longrightarrow$)
Divide the interval $[0,1]$ into a disjoint union,
\begin{align} \label{eq:disjoint_union_t_i_b}
    [0,B_1) \cup [B_1, B_2) \cup \ldots \cup [B_{k-1}, B_k) \cup [B_k, 1].
\end{align}
We now prove that $b \le f$ in each of these intervals: 
\begin{enumerate}
    \item $[0,B_1)$: By the definition of $B_i$, for all $t < B_1$ we have $b(t) \le 0,$ and since $f$ is non-negative it
follows
that $b(t) \le 0 \le f(t)$.

    \item $[B_i,B_{i+1})$:
    If $t < B_{i+1}$ then $b(t) \le i$.
    Since we assumed $f(B_i) \ge i$ it follows that for all $t \in [B_i, B_{i+1})$ we have \( b(t) \le i \le f(B_i) \le f(t)
\), where the last inequality is due to the monotonicity of $f$.
    \item $[B_k, 1]$: $f(t) \ge f(B_k) \ge k \ge b(t)$. The first inequality follows from the monotonicity of $f$, the second
is an assumption of the lemma, and the last inequality follows from the definition of $k$.
\end{enumerate}

\noindent ($\Longleftarrow$)
By the definition of $B_i$ there is a series of real numbers $t_j \in [B_i, 1]$ such that $t_j \to B_i$
and $b(t_j) > i-1$. By the assumption  $f(t_j) \ge b(t_j) > i-1$.
Since $i$ and $f(t_j)$ are both integers, this means that $f(t_j) \ge i$.
From the right-continuity of $f$ we conclude that
\begin{align}
    f(B_i) = \lim_{j \to \infty} f(t_j) \ge i.
\end{align} 
\end{proof}
A direct consequence of this lemma is that the probability that the empirical CDF of a uniform sample does not cross a lower
boundary is equal to the probability that the order statistics satisfy a set of simultaneous upper bounds.
\begin{corollary} \label{cor:reduction_to_finite}
    Let $U_1, \ldots, U_n \simiid U[0,1]$ be a sample with empirical cumulative distribution $F_n(t) = \tfrac{1}{n}
\sum_i 1(U_i \le t)$ and let  $b:[0,1] \to \mathbb{R}$ be a function, then
    \begin{align} \label{eq:cor:reduction_to_finite}
        \pr{\forall t \in [0,1]: b(t) \le n F_n(t)}
        =
        \pr{\forall i \in \{1, \ldots, k\}: U_{(i)} \le B_{i}}, \nonumber
    \end{align}
    where $U_{(1)} \le \cdots \le U_{(n)}$ are the order statistics of the sample and $B_1, \ldots, B_k$ are the first
integer crossings of $b(t)$ as defined in Eq. \eqref{eq:t_i_def}.
\end{corollary}
\begin{proof}
    By Lemma \ref{lemma:reduction_to_finite}, $b(t) \le n F_n(t)$ for
all $t$ if and only if $n F_n(B_i) \ge i$ for all i.
    By the definition of the empirical CDF, $n F_n(B_i) \ge i$ if and only if at least $i$ elements of the sample are
 at most $B_i$, in other words, that $U_{(i)} \le B_i$.
 The result follows.
\end{proof}
Hence the problem of computing the non-crossing probability~\eqref{eq:pr_binomial_no_cross}, is reduced to
 the probability that the inequalities $n F_n(B_i) \ge i$ hold at a finite set of times.
The reduction can also be made in the other direction, from the calculation of the discrete
boundary crossing probability \eqref{eq:order_statistics_upperbound_problem_2} to the continuous boundary
crossing \eqref{eq:continuous_noncrossing_prob}.
\begin{corollary}[reduction from the discrete to the continuous problem]
    Let $U_1, \ldots, U_n \simiid U[0,1]$ and let $B_1, \ldots, B_{k}$ be a set of upper bounds in the discrete boundary
crossing probability \eqref{eq:order_statistics_upperbound_problem}.
    Define their cumulative function as $b(t) = \sum_{i=1}^{k} 1(B_i \le t)$, then
    \begin{align}
        \pr{\forall i \in \{1, \ldots, k\} : U_{(i)} \le B_i} = \pr{\forall t: b(t) \le n F_n(t)},
    \end{align}
    where $F_n$ is the empirical CDF of $U_1, \ldots, U_n$.
\end{corollary}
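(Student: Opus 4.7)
The plan is to invert the logic of Corollary \ref{cor:reduction_to_finite}. That corollary starts from a monotone boundary function $b$, extracts first-integer-passage times $B_i$, and concludes the desired equality of probabilities. Here we are handed the $B_i$ and construct $b$, so the task reduces to verifying that this construction is consistent with the passage-time extraction in \eqref{eq:t_i_def}, after which the result follows by invoking Corollary \ref{cor:reduction_to_finite} with $f = n\hat{F}_n$.

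First I would reduce to the case where $B_1 \le B_2 \le \ldots \le B_n$. This is without loss of generality: if $B_{i+1} < B_i$, then the order-statistic inequality $X_{(i)} \le X_{(i+1)} \le B_{i+1}$ shows that replacing $B_i$ by $\min_{j \ge i} B_j$ leaves the event $\{\forall i: X_{(i)} \le B_i\}$ unchanged, and clearly also leaves the set (hence the sum) defining $b(t)$ unchanged. Next, assuming the $B_i$ are sorted, I would verify that $b(t) = \sum_i \mathbf{1}(B_i<t)$ is a nondecreasing step function on $[0,1]$ with $b(t) = |\{j : B_j < t\}|$, and that the first-integer-passage times $B'_i := \inf\{t : b(t) > i-1\}$ defined in \eqref{eq:t_i_def} coincide with the input $B_i$. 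In the distinct case this is immediate because $b(B_i) = i-1$ and $b(B_i+\epsilon) = i$ for small $\epsilon>0$; in the case of ties, the sorted locations of unit jumps of the piecewise-constant counting function $b$ (counted with multiplicity) are exactly the sorted $B_i$.

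Third, I would apply Corollary \ref{cor:reduction_to_finite} with $f = n\hat{F}_n$. The empirical CDF of an i.i.d. sample is a counting process with bounded rate (at most $n$ jumps in $[0,1]$), so the hypotheses of the corollary are satisfied and it yields
\[
    \pr{\forall t \in [0,1]: b(t) \le n\hat{F}_n(t)} = \pr{\forall i: n\hat{F}_n(B'_i) \ge i} = \pr{\forall i: X_{(i)} \le B_i},
\]
where the last equality is the standard identification $n\hat{F}_n(B_i) = |\{j : X_j \le B_i\}| \ge i \Leftrightarrow X_{(i)} \le B_i$, together with $B'_i = B_i$ from the previous step.

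The main piece of bookkeeping is the handling of ties among the $B_i$ and the boundary values $B_i \in \{0,1\}$: when several $B_j$'s coincide, $b$ jumps by more than one there, and one must keep track of how the first-integer-passage indexing aligns with the input indexing. This is resolved by the observation that a nondecreasing integer-valued step function is completely characterized (up to the trivial behavior above its maximum) by the sorted multiset of its unit jumps, so the correspondence $B'_i \leftrightarrow B_i$ is forced. The edge case $B_n = 1$ is harmless since $X_{(n)} \le 1$ almost surely, matching $\lceil b(1) \rceil \le n$ in the hypothesis of Lemma \ref{lemma:reduction_to_finite}.(ii). Once these bookkeeping items are checked, the corollary is a direct consequence of \ref{cor:reduction_to_finite}.
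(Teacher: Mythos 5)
Your core argument is the same as the paper's. The paper's proof consists of exactly one observation: for the constructed $b(t)=\sum_i \mathbf{1}(B_i<t)$, the first-integer-passage times of Eq.~\eqref{eq:t_i_def} coincide with the given $B_i$, after which Corollary~\ref{cor:reduction_to_finite} (applied to $f=n\hat F_n$, together with $n\hat F_n(B_i)\ge i \Leftrightarrow X_{(i)}\le B_i$) gives the result. Your second and third steps are precisely this, and your handling of ties among the $B_i$ and of the edge case $B_i=1$ is a correct and somewhat more explicit version of what the paper leaves implicit.

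The problem is your opening ``without loss of generality'' step. Replacing $B_i$ by $\min_{j\ge i}B_j$ does leave the event $\{\forall i:\ X_{(i)}\le B_i\}$ unchanged, but it does \emph{not} leave $b$ unchanged: it changes the multiset of values $\{B_i\}$, so a jump of $b$ moves to the left and the right-hand probability changes. Indeed, the identity in the corollary is simply false for non-monotone bounds: take $n=2$, $B_1=0.8$, $B_2=0.3$. Then $\pr{X_{(1)}\le 0.8,\ X_{(2)}\le 0.3}=\pr{X_{(2)}\le 0.3}=0.09$, while $b(t)=\mathbf{1}(t>0.3)+\mathbf{1}(t>0.8)$ gives $\pr{\forall t:\ b\le 2\hat F_2}=\pr{X_{(1)}\le 0.3,\ X_{(2)}\le 0.8}=0.39$. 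So no sorting reduction can rescue the unsorted case; the statement has to be read with $B_1\le\cdots\le B_n$ as an (implicit) hypothesis, which is also what the paper's own proof needs for the claim $B_i=\inf\{t:\ b(t)>i-1\}$, and what your own identification of the passage times uses. Drop the WLOG paragraph (or add monotonicity of the bounds as a hypothesis), and the rest of your argument is correct and coincides with the paper's proof.
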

\begin{proof}
    By the construction of $b(t)$, for all $i$, $ B_i = \inf\{t : b(t) > i-1 \} $.
    This coincides with the definition of $B_i$ in Eq. \eqref{eq:t_i_def}.
    Equation \eqref{eq:cor:reduction_to_finite} follows.
\end{proof}
\begin{remark}
For data from a non-uniform distribution $X_i \sim F$, we may transform the variables as $U_i = F(X_i)$.
If the distribution $F$ is continuous then $U_i \sim U[0,1]$, thus we may directly apply the reductions above to the
transformed variables as described in Section~\ref{sec:pvalue}.
However, discontinuous distributions require a more intricate analysis.
For the full details of the reduction in the discontinuous case, see Theorem 1 of \cite{Gleser1985} which extends Corollary~\ref{cor:reduction_to_finite} above.
These results were used by \cite{Dimitrova2020a} to compute the distribution of the Kolmogorov-Smirnov statistic when the
underlying distribution $F$ is discontinuous.
\end{remark}


\section[Appendix B. Benchmark and implementation details]{Benchmark and implementation details} \label{sec:implementation}

All four methods compared in Figure~\ref{fig:benchmarks} were implemented in  C++,  compiled in \textsf{clang} 11.0.3, and tested on a 2019 Intel Core i7-8569U CPU.
For computing the fast Fourier transform we used the library \texttt{FFTW} 3.3.8 in single-threaded mode \citep{FrigoJohnson2005}.

In the calculation of $Q(i+1,:)$ according to Eq. \eqref{eq:Q_next}, we represent the zero elements implicitly, thus reducing the size of the FFT convolutions from $n+1$ to $n+1-i$.
This optimization already existed in our previous code for computing two-sided non-crossing probabilities \citep{MoscovichNadler2017}.
We added additional optimizations to
the two-sided Poisson-propagation algorithms {\bf KS~(2001)} and {\bf MN~(2017)} that specifically handles consecutive lower
bounds that satisfy $b_{i+1}=b_i$ as a special case (see Eq. \eqref{eq:two_sided}).
This makes the two methods more competitive for the computation of  one-sided boundary crossing probabilities.
This, in addition to several other technical code optimizations and the improvement in processor speed, resulted in an 8-fold
decrease in the running time of {\bf MN\,(2017)} in the one-sided boundary case, compared to our previous benchmark~\citep{MoscovichNadler2017}.

Our proposed algorithm  has a configurable jump size parameter $k$.
The entire range $k \in [\log n, n / \log n]$ gives asymptotically optimal results of $O(n^2)$.
To get a ballpark estimate for the optimal value of $k$ we set $k(x) = x \sqrt{n}$ and minimized the asymptotic runtime 
in Eq.~\eqref{eq:runtime_given_k}. The resulting minimizer is $k=\sqrt{2n}$. However, the setting used in the benchmarks
was $k = \sqrt{n}$ as this was empirically found to be faster.


\phantomsection
\addcontentsline{toc}{section}{References}
\bibliographystyle{elsarticle-harv}
\bibliography{crossing-probability}

\begin{thebibliography}{65}
\expandafter\ifx\csname natexlab\endcsname\relax\def\natexlab#1{#1}\fi
\providecommand{\url}[1]{\texttt{#1}}
\providecommand{\href}[2]{#2}
\providecommand{\path}[1]{#1}
\providecommand{\DOIprefix}{doi:}
\providecommand{\ArXivprefix}{arXiv:}
\providecommand{\URLprefix}{URL: }
\providecommand{\Pubmedprefix}{pmid:}
\providecommand{\doi}[1]{\href{http://dx.doi.org/#1}{\path{#1}}}
\providecommand{\Pubmed}[1]{\href{pmid:#1}{\path{#1}}}
\providecommand{\bibinfo}[2]{#2}
\ifx\xfnm\relax \def\xfnm[#1]{\unskip,\space#1}\fi
\bibitem[{Arias-Castro et~al.(2011)Arias-Castro, Cand{\`{e}}s and
  Plan}]{AriasCastroCandesPlan2011}
\bibinfo{author}{Arias-Castro, E.}, \bibinfo{author}{Cand{\`{e}}s, E.J.},
  \bibinfo{author}{Plan, Y.}, \bibinfo{year}{2011}.
\newblock \bibinfo{title}{{Global testing under sparse alternatives: ANOVA,
  multiple comparisons and the higher criticism}}.
\newblock \bibinfo{journal}{Annals of Statistics} \bibinfo{volume}{39},
  \bibinfo{pages}{2533--2556}.
\newblock \DOIprefix\doi{10.1214/11-AOS910}.
\bibitem[{Arias-Castro et~al.(2020)Arias-Castro, Huang and
  Verzelen}]{AriasCastroHuangVerzelen2020}
\bibinfo{author}{Arias-Castro, E.}, \bibinfo{author}{Huang, R.},
  \bibinfo{author}{Verzelen, N.}, \bibinfo{year}{2020}.
\newblock \bibinfo{title}{{Detection of sparse positive dependence}}.
\newblock \bibinfo{journal}{Electronic Journal of Statistics}
  \bibinfo{volume}{14}, \bibinfo{pages}{702--730}.
\newblock \DOIprefix\doi{10.1214/19-EJS1675}.
\bibitem[{Barnett et~al.(2017)Barnett, Mukherjee and
  Lin}]{BarnettMukherjeeLin2017}
\bibinfo{author}{Barnett, I.}, \bibinfo{author}{Mukherjee, R.},
  \bibinfo{author}{Lin, X.}, \bibinfo{year}{2017}.
\newblock \bibinfo{title}{{The Generalized Higher Criticism for Testing SNP-Set
  Effects in Genetic Association Studies}}.
\newblock \bibinfo{journal}{Journal of the American Statistical Association}
  \bibinfo{volume}{112}, \bibinfo{pages}{64--76}.
\newblock \DOIprefix\doi{10.1080/01621459.2016.1192039}.
\bibitem[{Bentley and Yao(1976)}]{BentleyYao1976}
\bibinfo{author}{Bentley, J.L.}, \bibinfo{author}{Yao, A.C.C.},
  \bibinfo{year}{1976}.
\newblock \bibinfo{title}{{An almost optimal algorithm for unbounded
  searching}}.
\newblock \bibinfo{journal}{Information Processing Letters}
  \bibinfo{volume}{5}, \bibinfo{pages}{82--87}.
\newblock \DOIprefix\doi{10.1016/0020-0190(76)90071-5}.
\bibitem[{Berk and Jones(1979)}]{BerkJones1979}
\bibinfo{author}{Berk, R.H.}, \bibinfo{author}{Jones, D.H.},
  \bibinfo{year}{1979}.
\newblock \bibinfo{title}{{Goodness-of-fit test statistics that dominate the
  Kolmogorov statistics}}.
\newblock \bibinfo{journal}{Zeitschrift f\"ur Wahrscheinlichkeitstheorie und
  Verwandte Gebiete} \bibinfo{volume}{47}, \bibinfo{pages}{47--59}.
\newblock \DOIprefix\doi{10.1007/BF00533250}.
\bibitem[{Brown and Harvey(2008a)}]{BrownHarvey2008a}
\bibinfo{author}{Brown, J.R.}, \bibinfo{author}{Harvey, M.E.},
  \bibinfo{year}{2008}a.
\newblock \bibinfo{title}{{Arbitrary Precision Mathematica Functions to
  Evaluate the One-Sided One Sample K-S Cumulative Sampling Distribution}}.
\newblock \bibinfo{journal}{Journal of Statistical Software}
  \bibinfo{volume}{26}, \bibinfo{pages}{128--129}.
\newblock \DOIprefix\doi{10.18637/jss.v026.i03}.
\bibitem[{Brown and Harvey(2008b)}]{BrownHarvey2008b}
\bibinfo{author}{Brown, J.R.}, \bibinfo{author}{Harvey, M.E.},
  \bibinfo{year}{2008}b.
\newblock \bibinfo{title}{{Rational Arithmetic Mathematica Functions to
  Evaluate the Two-Sided One Sample K-S Cumulative Sampling Distribution}}.
\newblock \bibinfo{journal}{Journal of Statistical Software}
  \bibinfo{volume}{26}, \bibinfo{pages}{1--40}.
\newblock \DOIprefix\doi{10.18637/jss.v026.i02}.
\bibitem[{Cadena et~al.(2019)Cadena, Chen and Vullikanti}]{Cadena2019}
\bibinfo{author}{Cadena, J.}, \bibinfo{author}{Chen, F.},
  \bibinfo{author}{Vullikanti, A.}, \bibinfo{year}{2019}.
\newblock \bibinfo{title}{{Near-Optimal and Practical Algorithms for Graph Scan
  Statistics with Connectivity Constraints}}.
\newblock \bibinfo{journal}{ACM Transactions on Knowledge Discovery from Data}
  \bibinfo{volume}{13}, \bibinfo{pages}{1--33}.
\newblock \DOIprefix\doi{10.1145/3309712}.
\bibitem[{Denuit et~al.(2003)Denuit, Lef{\`{e}}vre and
  Picard}]{DenuitLefevrePicard2003}
\bibinfo{author}{Denuit, M.}, \bibinfo{author}{Lef{\`{e}}vre, C.},
  \bibinfo{author}{Picard, P.}, \bibinfo{year}{2003}.
\newblock \bibinfo{title}{{Polynomial structures in order statistics
  distributions}}.
\newblock \bibinfo{journal}{Journal of Statistical Planning and Inference}
  \bibinfo{volume}{113}, \bibinfo{pages}{151--178}.
\newblock \DOIprefix\doi{10.1016/S0378-3758(01)00292-0}.
\bibitem[{Dimitrova et~al.(2017)Dimitrova, Ignatov and
  Kaishev}]{DimitrovaIgnatovKaishev2017}
\bibinfo{author}{Dimitrova, D.}, \bibinfo{author}{Ignatov, Z.},
  \bibinfo{author}{Kaishev, V.}, \bibinfo{year}{2017}.
\newblock \bibinfo{title}{{On the First Crossing of Two Boundaries by an Order
  Statistics Risk Process}}.
\newblock \bibinfo{journal}{Risks} \bibinfo{volume}{5}, \bibinfo{pages}{43}.
\newblock \DOIprefix\doi{10.3390/risks5030043}.
\bibitem[{Dimitrova et~al.(2020a)Dimitrova, Ignatov, Kaishev and
  Tan}]{Dimitrova2020}
\bibinfo{author}{Dimitrova, D.S.}, \bibinfo{author}{Ignatov, Z.G.},
  \bibinfo{author}{Kaishev, V.K.}, \bibinfo{author}{Tan, S.},
  \bibinfo{year}{2020}a.
\newblock \bibinfo{title}{{On double-boundary non-crossing probability for a
  class of compound processes with applications}}.
\newblock \bibinfo{journal}{European Journal of Operational Research}
  \bibinfo{volume}{282}, \bibinfo{pages}{602--613}.
\newblock \DOIprefix\doi{10.1016/j.ejor.2019.09.058}.
\bibitem[{Dimitrova et~al.(2020b)Dimitrova, Kaishev and Tan}]{Dimitrova2020a}
\bibinfo{author}{Dimitrova, D.S.}, \bibinfo{author}{Kaishev, V.K.},
  \bibinfo{author}{Tan, S.}, \bibinfo{year}{2020}b.
\newblock \bibinfo{title}{{Computing the Kolmogorov-Smirnov Distribution When
  the Underlying CDF is Purely Discrete, Mixed, or Continuous}}.
\newblock \bibinfo{journal}{Journal of Statistical Software}
  \bibinfo{volume}{95}, \bibinfo{pages}{1--42}.
\newblock \DOIprefix\doi{10.18637/jss.v095.i10}.
\bibitem[{Ding et~al.(2018)Ding, Zhang, Durvaux, Standaert and
  Fei}]{DingEtal2018}
\bibinfo{author}{Ding, A.A.}, \bibinfo{author}{Zhang, L.},
  \bibinfo{author}{Durvaux, F.}, \bibinfo{author}{Standaert, F.X.},
  \bibinfo{author}{Fei, Y.}, \bibinfo{year}{2018}.
\newblock \bibinfo{title}{{Towards Sound and Optimal Leakage Detection
  Procedure}}, in: \bibinfo{booktitle}{Lecture Notes in Computer Science}.
  \bibinfo{publisher}{Springer}. volume \bibinfo{volume}{10728 LNCS}, pp.
  \bibinfo{pages}{105--122}.
\newblock \DOIprefix\doi{10.1007/978-3-319-75208-2_7}.
\bibitem[{Dongchu(1998)}]{Dongchu1998}
\bibinfo{author}{Dongchu, S.}, \bibinfo{year}{1998}.
\newblock \bibinfo{title}{{Exact computation for some sequential tests}}.
\newblock \bibinfo{journal}{Sequential Analysis} \bibinfo{volume}{17},
  \bibinfo{pages}{127--150}.
\newblock \DOIprefix\doi{10.1080/07474949808836403}.
\bibitem[{Donoho and Jin(2004)}]{DonohoJin2004}
\bibinfo{author}{Donoho, D.}, \bibinfo{author}{Jin, J.}, \bibinfo{year}{2004}.
\newblock \bibinfo{title}{{Higher criticism for detecting sparse heterogeneous
  mixtures}}.
\newblock \bibinfo{journal}{The Annals of Statistics} \bibinfo{volume}{32},
  \bibinfo{pages}{962--994}.
\newblock \DOIprefix\doi{10.1214/009053604000000265}.
\bibitem[{Durbin(1971)}]{Durbin1971}
\bibinfo{author}{Durbin, J.}, \bibinfo{year}{1971}.
\newblock \bibinfo{title}{{Boundary-crossing probabilities for the Brownian
  motion and Poisson processes and techniques for computing the power of the
  Kolmogorov-Smirnov test}}.
\newblock \bibinfo{journal}{Journal of Applied Probability}
  \bibinfo{volume}{8}, \bibinfo{pages}{431--453}.
\newblock \DOIprefix\doi{10.2307/3212169}.
\bibitem[{Durbin(1973)}]{Durbin1973}
\bibinfo{author}{Durbin, J.}, \bibinfo{year}{1973}.
\newblock \bibinfo{title}{{Distribution Theory for Tests Based on the Sample
  Distribution Function}}.
\newblock \bibinfo{publisher}{Society for Industrial and Applied Mathematics}.
\newblock \DOIprefix\doi{10.1137/1.9781611970586}.
\bibitem[{Eicker(1979)}]{Eicker1979}
\bibinfo{author}{Eicker, F.}, \bibinfo{year}{1979}.
\newblock \bibinfo{title}{{The Asymptotic Distribution of the Suprema of the
  Standardized Empirical Processes}}.
\newblock \bibinfo{journal}{The Annals of Statistics} \bibinfo{volume}{7},
  \bibinfo{pages}{116--138}.
\newblock \DOIprefix\doi{10.1214/aos/1176344559}.
\bibitem[{Epanechnikov(1968)}]{Epanechnikov1968}
\bibinfo{author}{Epanechnikov, V.A.}, \bibinfo{year}{1968}.
\newblock \bibinfo{title}{{The Significance Level and Power of the Two-Sided
  Kolmogorov Test in the Case of Small Sample Sizes}}.
\newblock \bibinfo{journal}{Theory of Probability \& Its Applications}
  \bibinfo{volume}{13}, \bibinfo{pages}{686--690}.
\newblock \DOIprefix\doi{10.1137/1113085}.
\bibitem[{Finner and Gontscharuk(2018)}]{FinnerGontscharuk2018}
\bibinfo{author}{Finner, H.}, \bibinfo{author}{Gontscharuk, V.},
  \bibinfo{year}{2018}.
\newblock \bibinfo{title}{{Two-sample Kolmogorov–Smirnov-type tests
  revisited: Old and new tests in terms of local levels}}.
\newblock \bibinfo{journal}{The Annals of Statistics} \bibinfo{volume}{46},
  \bibinfo{pages}{3014--3037}.
\newblock \DOIprefix\doi{10.1214/17-AOS1647}.
\bibitem[{Frey(2008)}]{Frey2008}
\bibinfo{author}{Frey, J.}, \bibinfo{year}{2008}.
\newblock \bibinfo{title}{{Optimal distribution-free confidence bands for a
  distribution function}}.
\newblock \bibinfo{journal}{Journal of Statistical Planning and Inference}
  \bibinfo{volume}{138}, \bibinfo{pages}{3086--3098}.
\newblock \DOIprefix\doi{10.1016/j.jspi.2007.12.001}.
\bibitem[{Friedrich and Schellhaas(1998)}]{FriedrichSchellhaas1998}
\bibinfo{author}{Friedrich, T.}, \bibinfo{author}{Schellhaas, H.},
  \bibinfo{year}{1998}.
\newblock \bibinfo{title}{{Computation of the percentage points and the power
  for the two-sided Kolmogorov-Smirnov one sample test}}.
\newblock \bibinfo{journal}{Statistical Papers} \bibinfo{volume}{39},
  \bibinfo{pages}{361--375}.
\newblock \DOIprefix\doi{10.1007/BF02927099}.
\bibitem[{Frigo and Johnson(2005)}]{FrigoJohnson2005}
\bibinfo{author}{Frigo, M.}, \bibinfo{author}{Johnson, S.},
  \bibinfo{year}{2005}.
\newblock \bibinfo{title}{{The Design and Implementation of FFTW3}}.
\newblock \bibinfo{journal}{Proceedings of the IEEE} \bibinfo{volume}{93},
  \bibinfo{pages}{216--231}.
\newblock \DOIprefix\doi{10.1109/JPROC.2004.840301}.
\bibitem[{Gleser(1985)}]{Gleser1985}
\bibinfo{author}{Gleser, L.J.}, \bibinfo{year}{1985}.
\newblock \bibinfo{title}{{Exact Power of Goodness-of-Fit Tests of Kolmogorov
  Type for Discontinuous Distributions}}.
\newblock \bibinfo{journal}{Journal of the American Statistical Association}
  \bibinfo{volume}{80}, \bibinfo{pages}{954--958}.
\newblock \DOIprefix\doi{10.1080/01621459.1985.10478210}.
\bibitem[{Goffard(2019)}]{Goffard2019}
\bibinfo{author}{Goffard, P.O.}, \bibinfo{year}{2019}.
\newblock \bibinfo{title}{{Two-Sided Exit Problems in the Ordered Risk Model}}.
\newblock \bibinfo{journal}{Methodology and Computing in Applied Probability}
  \bibinfo{volume}{21}, \bibinfo{pages}{539--549}.
\newblock \DOIprefix\doi{10.1007/s11009-017-9606-z}.
\bibitem[{Goldman and Kaplan(2018)}]{GoldmanKaplan2018}
\bibinfo{author}{Goldman, M.}, \bibinfo{author}{Kaplan, D.M.},
  \bibinfo{year}{2018}.
\newblock \bibinfo{title}{{Comparing distributions by multiple testing across
  quantiles or CDF values}}.
\newblock \bibinfo{journal}{Journal of Econometrics} \bibinfo{volume}{206},
  \bibinfo{pages}{143--166}.
\newblock \DOIprefix\doi{10.1016/j.jeconom.2018.04.003}.
\bibitem[{Gontscharuk et~al.(2015)Gontscharuk, Landwehr and
  Finner}]{GontscharukLandwehrFinner2015}
\bibinfo{author}{Gontscharuk, V.}, \bibinfo{author}{Landwehr, S.},
  \bibinfo{author}{Finner, H.}, \bibinfo{year}{2015}.
\newblock \bibinfo{title}{{The intermediates take it all: Asymptotics of higher
  criticism statistics and a powerful alternative based on equal local
  levels}}.
\newblock \bibinfo{journal}{Biometrical Journal} \bibinfo{volume}{57},
  \bibinfo{pages}{159--180}.
\newblock \DOIprefix\doi{10.1002/bimj.201300255}.
\bibitem[{Hall and Jin(2010)}]{HallJin2010}
\bibinfo{author}{Hall, P.}, \bibinfo{author}{Jin, J.}, \bibinfo{year}{2010}.
\newblock \bibinfo{title}{{Innovated higher criticism for detecting sparse
  signals in correlated noise}}.
\newblock \bibinfo{journal}{The Annals of Statistics} \bibinfo{volume}{38},
  \bibinfo{pages}{1686--1732}.
\newblock \DOIprefix\doi{10.1214/09-AOS764}.
\bibitem[{Jaeschke(1979)}]{Jaeschke1979}
\bibinfo{author}{Jaeschke, D.}, \bibinfo{year}{1979}.
\newblock \bibinfo{title}{{The Asymptotic Distribution of the Supremum of the
  Standardized Empirical Distribution Function on Subintervals}}.
\newblock \bibinfo{journal}{The Annals of Statistics} \bibinfo{volume}{7},
  \bibinfo{pages}{108--115}.
\newblock \DOIprefix\doi{10.1214/aos/1176344558}.
\bibitem[{Jager and Wellner(2004)}]{JagerWellner2004}
\bibinfo{author}{Jager, L.}, \bibinfo{author}{Wellner, J.A.},
  \bibinfo{year}{2004}.
\newblock \bibinfo{title}{{On the "Poisson boundaries" of the family of
  weighted Kolmogorov statistics}}, in: \bibinfo{editor}{DasGupta, A.} (Ed.),
  \bibinfo{booktitle}{A Festschrift for Herman Rubin}.
  \bibinfo{publisher}{Institute of Mathematical Statistics}.
  volume~\bibinfo{volume}{45} of \textit{\bibinfo{series}{Lecture
  Notes--Monograph Series}}, pp. \bibinfo{pages}{319--331}.
\newblock \DOIprefix\doi{10.1214/lnms/1196285400}.
\bibitem[{Jager and Wellner(2005)}]{JagerWellner2005}
\bibinfo{author}{Jager, L.}, \bibinfo{author}{Wellner, J.A.},
  \bibinfo{year}{2005}.
\newblock \bibinfo{title}{{A new goodness of fit test: the reversed Berk-Jones
  statistic}}.
\newblock \bibinfo{type}{Technical Report}. University of Washington.
\bibitem[{Jager and Wellner(2007)}]{JagerWellner2007}
\bibinfo{author}{Jager, L.}, \bibinfo{author}{Wellner, J.A.},
  \bibinfo{year}{2007}.
\newblock \bibinfo{title}{{Goodness-of-fit tests via phi-divergences}}.
\newblock \bibinfo{journal}{The Annals of Statistics} \bibinfo{volume}{35},
  \bibinfo{pages}{2018--2053}.
\newblock \DOIprefix\doi{10.1214/0009053607000000244}.
\bibitem[{Khmaladze and Shinjikashvili(2001)}]{KhmaladzeShinjikashvili2001}
\bibinfo{author}{Khmaladze, E.}, \bibinfo{author}{Shinjikashvili, E.},
  \bibinfo{year}{2001}.
\newblock \bibinfo{title}{{Calculation of noncrossing probabilities for Poisson
  processes and its corollaries}}.
\newblock \bibinfo{journal}{Advances in Applied Probability}
  \bibinfo{volume}{33}, \bibinfo{pages}{702--716}.
\newblock \DOIprefix\doi{10.1239/aap/1005091361}.
\bibitem[{Kipnis(2022)}]{Kipnis2022}
\bibinfo{author}{Kipnis, A.}, \bibinfo{year}{2022}.
\newblock \bibinfo{title}{{Unification of Rare/Weak Detection Models using
  Moderate Deviations Analysis and Log-Chisquared P-values}}.
\newblock \bibinfo{type}{Technical Report}. Stanford University.
\newblock \href{http://arxiv.org/abs/2103.03999}{{\tt arXiv:2103.03999}}.
\bibitem[{Kolmogorov(1933)}]{Kolmogorov1933}
\bibinfo{author}{Kolmogorov, A.N.}, \bibinfo{year}{1933}.
\newblock \bibinfo{title}{{Sulla determinazione empirica di una legge di
  distribuzione}}.
\newblock \bibinfo{journal}{Giornale dell'Istituto Italiano degli Attuari}
  \bibinfo{volume}{4}, \bibinfo{pages}{83--91}.
\bibitem[{Kotel'nikova and Chmaladze(1983)}]{KotelnikovaKhmaladze1983}
\bibinfo{author}{Kotel'nikova, V.F.}, \bibinfo{author}{Chmaladze, E.V.},
  \bibinfo{year}{1983}.
\newblock \bibinfo{title}{{On Computing the Probability of an Empirical Process
  not Crossing a Curvilinear Boundary}}.
\newblock \bibinfo{journal}{Theory of Probability \& Its Applications}
  \bibinfo{volume}{27}, \bibinfo{pages}{640--648}.
\newblock \DOIprefix\doi{10.1137/1127075}.
\bibitem[{Li and Siegmund(2015)}]{LiSiegmund2015}
\bibinfo{author}{Li, J.}, \bibinfo{author}{Siegmund, D.}, \bibinfo{year}{2015}.
\newblock \bibinfo{title}{{Higher criticism: $p$-values and criticism}}.
\newblock \bibinfo{journal}{The Annals of Statistics} \bibinfo{volume}{43},
  \bibinfo{pages}{1323--1350}.
\newblock \DOIprefix\doi{10.1214/15-AOS1312}.
\bibitem[{Liu et~al.(2022)Liu, Xu, Wang, Huang and Liu}]{Liu2022}
\bibinfo{author}{Liu, W.}, \bibinfo{author}{Xu, Y.}, \bibinfo{author}{Wang,
  A.}, \bibinfo{author}{Huang, T.}, \bibinfo{author}{Liu, Z.},
  \bibinfo{year}{2022}.
\newblock \bibinfo{title}{{The eigen higher criticism and eigen Berk–Jones
  tests for multiple trait association studies based on GWAS summary
  statistics}}.
\newblock \bibinfo{journal}{Genetic Epidemiology} \bibinfo{volume}{46},
  \bibinfo{pages}{89--104}.
\newblock \DOIprefix\doi{10.1002/gepi.22439}.
\bibitem[{Mason and Schuenemeyer(1983)}]{MasonSchuenemeyer1983}
\bibinfo{author}{Mason, D.M.}, \bibinfo{author}{Schuenemeyer, J.H.},
  \bibinfo{year}{1983}.
\newblock \bibinfo{title}{{A Modified Kolmogorov-Smirnov Test Sensitive to Tail
  Alternatives}}.
\newblock \bibinfo{journal}{The Annals of Statistics} \bibinfo{volume}{11},
  \bibinfo{pages}{933--946}.
\newblock \DOIprefix\doi{10.1214/aos/1176346259}.
\bibitem[{Matthews(2013)}]{Matthews2013}
\bibinfo{author}{Matthews, D.}, \bibinfo{year}{2013}.
\newblock \bibinfo{title}{{Exact Nonparametric Confidence Bands for the
  Survivor Function}}.
\newblock \bibinfo{journal}{The International Journal of Biostatistics}
  \bibinfo{volume}{9}, \bibinfo{pages}{185--204}.
\newblock \DOIprefix\doi{10.1515/ijb-2012-0046}.
\bibitem[{Meinshausen and Rice(2006)}]{MeinshausenRice2006}
\bibinfo{author}{Meinshausen, N.}, \bibinfo{author}{Rice, J.},
  \bibinfo{year}{2006}.
\newblock \bibinfo{title}{{Estimating the proportion of false null hypotheses
  among a large number of independently tested hypotheses}}.
\newblock \bibinfo{journal}{The Annals of Statistics} \bibinfo{volume}{34},
  \bibinfo{pages}{373--393}.
\newblock \DOIprefix\doi{10.1214/009053605000000741}.
\bibitem[{Miecznikowski and Wang(2023)}]{MiecznikowskiWang2023}
\bibinfo{author}{Miecznikowski, J.C.}, \bibinfo{author}{Wang, J.},
  \bibinfo{year}{2023}.
\newblock \bibinfo{title}{{Exceedance control of the false discovery proportion
  via high precision inversion method of Berk-Jones statistics}}.
\newblock \bibinfo{journal}{Computational Statistics \& Data Analysis} ,
  \bibinfo{pages}{107758}\DOIprefix\doi{10.1016/j.csda.2023.107758}.
\bibitem[{Miecznikowski et~al.(2017)Miecznikowski, Wang, Gaile and
  Tritchler}]{Miecznikowski2017}
\bibinfo{author}{Miecznikowski, J.C.}, \bibinfo{author}{Wang, J.},
  \bibinfo{author}{Gaile, D.P.}, \bibinfo{author}{Tritchler, D.L.},
  \bibinfo{year}{2017}.
\newblock \bibinfo{title}{{A novel exact method for significance of higher
  criticism via Steck's determinant}}.
\newblock \bibinfo{journal}{Statistics \& Probability Letters}
  \bibinfo{volume}{130}, \bibinfo{pages}{105--110}.
\newblock \DOIprefix\doi{10.1016/j.spl.2017.07.009}.
\bibitem[{Moscovich and Nadler(2017)}]{MoscovichNadler2017}
\bibinfo{author}{Moscovich, A.}, \bibinfo{author}{Nadler, B.},
  \bibinfo{year}{2017}.
\newblock \bibinfo{title}{{Fast calculation of boundary crossing probabilities
  for Poisson processes}}.
\newblock \bibinfo{journal}{Statistics \& Probability Letters}
  \bibinfo{volume}{123}, \bibinfo{pages}{177--182}.
\newblock \DOIprefix\doi{10.1016/j.spl.2016.11.027}.
\bibitem[{Moscovich et~al.(2016)Moscovich, Nadler and
  Spiegelman}]{MoscovichNadlerSpiegelman2016}
\bibinfo{author}{Moscovich, A.}, \bibinfo{author}{Nadler, B.},
  \bibinfo{author}{Spiegelman, C.}, \bibinfo{year}{2016}.
\newblock \bibinfo{title}{{On the exact Berk-Jones statistics and their
  $p$-value calculation}}.
\newblock \bibinfo{journal}{Electronic Journal of Statistics}
  \bibinfo{volume}{10}, \bibinfo{pages}{2329--2354}.
\newblock \DOIprefix\doi{10.1214/16-EJS1172}.
\bibitem[{No\'e(1972)}]{Noe1972}
\bibinfo{author}{No\'e, M.}, \bibinfo{year}{1972}.
\newblock \bibinfo{title}{{The Calculation of Distributions of Two-Sided
  Kolmogorov-Smirnov Type Statistics}}.
\newblock \bibinfo{journal}{The Annals of Mathematical Statistics}
  \bibinfo{volume}{43}, \bibinfo{pages}{58--64}.
\newblock \DOIprefix\doi{10.1214/aoms/1177692700}.
\bibitem[{No\'e and Vandewiele(1968)}]{NoeVandewiele1968}
\bibinfo{author}{No\'e, M.}, \bibinfo{author}{Vandewiele, G.},
  \bibinfo{year}{1968}.
\newblock \bibinfo{title}{{The Calculation of Distributions of
  Kolmogorov-Smirnov Type Statistics Including a Table of Significance Points
  for a Particular Case}}.
\newblock \bibinfo{journal}{The Annals of Mathematical Statistics}
  \bibinfo{volume}{39}, \bibinfo{pages}{233--241}.
\newblock \DOIprefix\doi{10.1214/aoms/1177698523}.
\bibitem[{Owen(1995)}]{Owen1995}
\bibinfo{author}{Owen, A.B.}, \bibinfo{year}{1995}.
\newblock \bibinfo{title}{{Nonparametric Likelihood Confidence Bands for a
  Distribution Function}}.
\newblock \bibinfo{journal}{Journal of the American Statistical Association}
  \bibinfo{volume}{90}, \bibinfo{pages}{516}.
\newblock \DOIprefix\doi{10.2307/2291062}.
\bibitem[{Porter and Stewart(2020)}]{PorterStewart2020}
\bibinfo{author}{Porter, T.}, \bibinfo{author}{Stewart, M.},
  \bibinfo{year}{2020}.
\newblock \bibinfo{title}{{Beyond HC: More sensitive tests for rare/weak
  alternatives}}.
\newblock \bibinfo{journal}{The Annals of Statistics} \bibinfo{volume}{48},
  \bibinfo{pages}{2230--2252}.
\newblock \DOIprefix\doi{10.1214/19-AOS1885}.
\bibitem[{Press et~al.(1992)Press, Flannery, Teukolsky and
  T.}]{NumericalRecipes1992}
\bibinfo{author}{Press, W.H.}, \bibinfo{author}{Flannery, B.P.},
  \bibinfo{author}{Teukolsky, S.A.}, \bibinfo{author}{T., V.W.},
  \bibinfo{year}{1992}.
\newblock \bibinfo{title}{{Numerical recipes in C: the art of scientific
  computing}}.
\newblock \bibinfo{edition}{2nd} ed., \bibinfo{publisher}{Cambridge University
  Press}.
\bibitem[{R{\'{e}}nyi(1953)}]{Renyi1953}
\bibinfo{author}{R{\'{e}}nyi, A.}, \bibinfo{year}{1953}.
\newblock \bibinfo{title}{{On the theory of order statistics}}.
\newblock \bibinfo{journal}{Acta Mathematica Academiae Scientiarum Hungaricae}
  \bibinfo{volume}{4}, \bibinfo{pages}{191--231}.
\newblock \DOIprefix\doi{10.1007/BF02127580}.
\bibitem[{Roquain and Villers(2011)}]{RoquainVillers2011}
\bibinfo{author}{Roquain, E.}, \bibinfo{author}{Villers, F.},
  \bibinfo{year}{2011}.
\newblock \bibinfo{title}{{Exact calculations for false discovery proportion
  with application to least favorable configurations}}.
\newblock \bibinfo{journal}{The Annals of Statistics} \bibinfo{volume}{39},
  \bibinfo{pages}{584--612}.
\newblock \DOIprefix\doi{10.1214/10-AOS847}.
\bibitem[{Sabatti et~al.(2009)Sabatti, Service, Hartikainen, Pouta, Ripatti,
  Brodsky, Jones, Zaitlen, Varilo, Kaakinen, Sovio, Ruokonen, Laitinen,
  Jakkula, Coin, Hoggart, Collins, Turunen, Gabriel, Elliot, McCarthy, Daly,
  J{\"{a}}rvelin, Freimer and Peltonen}]{SabattiEtal2009}
\bibinfo{author}{Sabatti, C.}, \bibinfo{author}{Service, S.K.},
  \bibinfo{author}{Hartikainen, A.L.}, \bibinfo{author}{Pouta, A.},
  \bibinfo{author}{Ripatti, S.}, \bibinfo{author}{Brodsky, J.},
  \bibinfo{author}{Jones, C.G.}, \bibinfo{author}{Zaitlen, N.A.},
  \bibinfo{author}{Varilo, T.}, \bibinfo{author}{Kaakinen, M.},
  \bibinfo{author}{Sovio, U.}, \bibinfo{author}{Ruokonen, A.},
  \bibinfo{author}{Laitinen, J.}, \bibinfo{author}{Jakkula, E.},
  \bibinfo{author}{Coin, L.}, \bibinfo{author}{Hoggart, C.},
  \bibinfo{author}{Collins, A.}, \bibinfo{author}{Turunen, H.},
  \bibinfo{author}{Gabriel, S.}, \bibinfo{author}{Elliot, P.},
  \bibinfo{author}{McCarthy, M.I.}, \bibinfo{author}{Daly, M.J.},
  \bibinfo{author}{J{\"{a}}rvelin, M.R.}, \bibinfo{author}{Freimer, N.B.},
  \bibinfo{author}{Peltonen, L.}, \bibinfo{year}{2009}.
\newblock \bibinfo{title}{{Genome-wide association analysis of metabolic traits
  in a birth cohort from a founder population}}.
\newblock \bibinfo{journal}{Nature Genetics} \bibinfo{volume}{41},
  \bibinfo{pages}{35--46}.
\newblock \DOIprefix\doi{10.1038/ng.271}.
\bibitem[{von Schroeder and Dickhaus(2020)}]{SchroederDickhaus2020}
\bibinfo{author}{von Schroeder, J.}, \bibinfo{author}{Dickhaus, T.},
  \bibinfo{year}{2020}.
\newblock \bibinfo{title}{{Efficient calculation of the joint distribution of
  order statistics}}.
\newblock \bibinfo{journal}{Computational Statistics \& Data Analysis}
  \bibinfo{volume}{144}, \bibinfo{pages}{106899}.
\newblock \DOIprefix\doi{10.1016/j.csda.2019.106899}.
\bibitem[{Shorack and Wellner(2009)}]{ShorackWellner2009}
\bibinfo{author}{Shorack, G.R.}, \bibinfo{author}{Wellner, J.A.},
  \bibinfo{year}{2009}.
\newblock \bibinfo{title}{{Empirical Processes with Applications to
  Statistics}}.
\newblock \bibinfo{publisher}{Society for Industrial and Applied Mathematics}.
\newblock \DOIprefix\doi{10.1137/1.9780898719017}.
\bibitem[{Steck(1971)}]{Steck1971}
\bibinfo{author}{Steck, G.P.}, \bibinfo{year}{1971}.
\newblock \bibinfo{title}{{Rectangle Probabilities for Uniform Order statistics
  and the Probability That the Empirical Distribution Function Lies Between Two
  Distribution Functions}}.
\newblock \bibinfo{journal}{The Annals of Mathematical Statistics}
  \bibinfo{volume}{42}, \bibinfo{pages}{1--11}.
\newblock \DOIprefix\doi{10.1214/aoms/1177693490}.
\bibitem[{Sulis et~al.(2017)Sulis, Mary and Bigot}]{SulisMaryBigot2017}
\bibinfo{author}{Sulis, S.}, \bibinfo{author}{Mary, D.},
  \bibinfo{author}{Bigot, L.}, \bibinfo{year}{2017}.
\newblock \bibinfo{title}{{A Study of Periodograms Standardized Using Training
  Datasets and Application to Exoplanet Detection}}.
\newblock \bibinfo{journal}{IEEE Transactions on Signal Processing}
  \bibinfo{volume}{65}, \bibinfo{pages}{2136--2150}.
\newblock \DOIprefix\doi{10.1109/TSP.2017.2652391}.
\bibitem[{Sun and Lin(2019)}]{SunLin2019}
\bibinfo{author}{Sun, R.}, \bibinfo{author}{Lin, X.}, \bibinfo{year}{2019}.
\newblock \bibinfo{title}{{Genetic Variant Set-Based Tests Using the
  Generalized Berk-Jones Statistic With Application to a Genome-Wide
  Association Study of Breast Cancer}}.
\newblock \bibinfo{journal}{Journal of the American Statistical Association}
  \bibinfo{volume}{0}, \bibinfo{pages}{1--13}.
\newblock \DOIprefix\doi{10.1080/01621459.2019.1660170}.
\bibitem[{Wald and Wolfowitz(1939)}]{WaldWolfowitz1939}
\bibinfo{author}{Wald, A.}, \bibinfo{author}{Wolfowitz, J.},
  \bibinfo{year}{1939}.
\newblock \bibinfo{title}{{Confidence Limits for Continuous Distribution
  Functions}}.
\newblock \bibinfo{journal}{The Annals of Mathematical Statistics}
  \bibinfo{volume}{10}, \bibinfo{pages}{105--118}.
\newblock \DOIprefix\doi{10.1214/aoms/1177732209}.
\bibitem[{Wang and Miecznikowski(2022)}]{Wang2022}
\bibinfo{author}{Wang, J.}, \bibinfo{author}{Miecznikowski, J.C.},
  \bibinfo{year}{2022}.
\newblock \bibinfo{title}{{High precision implementation of Steck's recursion
  method for use in goodness-of-fit tests}}.
\newblock \bibinfo{journal}{Journal of Applied Statistics}
  \bibinfo{volume}{49}, \bibinfo{pages}{1348--1363}.
\newblock \DOIprefix\doi{10.1080/02664763.2020.1861224}.
\bibitem[{Weine et~al.(2023)Weine, McPeek and Abney}]{Weine2023}
\bibinfo{author}{Weine, E.}, \bibinfo{author}{McPeek, M.S.},
  \bibinfo{author}{Abney, M.}, \bibinfo{year}{2023}.
\newblock \bibinfo{title}{{Application of Equal Local Levels to Improve Q-Q
  Plot Testing Bands with R Package qqconf}}.
\newblock \bibinfo{journal}{Journal of Statistical Software}
  \bibinfo{volume}{106}, \bibinfo{pages}{1--31}.
\newblock \DOIprefix\doi{10.18637/jss.v106.i10}.
\bibitem[{Wellner and Koltchinskii(2003)}]{WellnerKoltchinskii2003}
\bibinfo{author}{Wellner, J.A.}, \bibinfo{author}{Koltchinskii, V.},
  \bibinfo{year}{2003}.
\newblock \bibinfo{title}{{A Note on the Asymptotic Distribution of
  Berk—Jones Type Statistics under the Null Hypothesis}}, in:
  \bibinfo{booktitle}{High Dimensional Probability III}.
  \bibinfo{publisher}{Birkh{\"{a}}user Basel}, \bibinfo{address}{Basel}, pp.
  \bibinfo{pages}{321--332}.
\newblock \DOIprefix\doi{10.1007/978-3-0348-8059-6_19}.
\bibitem[{Worsley(1986)}]{Worsley1986}
\bibinfo{author}{Worsley, K.J.}, \bibinfo{year}{1986}.
\newblock \bibinfo{title}{{Confidence Regions and Tests for a Change-Point in a
  Sequence of Exponential Family Random Variables}}.
\newblock \bibinfo{journal}{Biometrika} \bibinfo{volume}{73},
  \bibinfo{pages}{91}.
\newblock \DOIprefix\doi{10.2307/2336275}.
\bibitem[{Zhang et~al.(2020)Zhang, Jin and Wu}]{ZhangJinWu2020}
\bibinfo{author}{Zhang, H.}, \bibinfo{author}{Jin, J.}, \bibinfo{author}{Wu,
  Z.}, \bibinfo{year}{2020}.
\newblock \bibinfo{title}{{Distributions and Power of Optimal Signal-Detection
  Statistics in Finite Case}}.
\newblock \bibinfo{journal}{IEEE Transactions on Signal Processing}
  \bibinfo{volume}{68}, \bibinfo{pages}{1021--1033}.
\newblock \DOIprefix\doi{10.1109/TSP.2020.2967179}.
\bibitem[{Zhang and Wu(2022)}]{Zhang2022}
\bibinfo{author}{Zhang, H.}, \bibinfo{author}{Wu, Z.}, \bibinfo{year}{2022}.
\newblock \bibinfo{title}{{The general goodness-of-fit tests for correlated
  data}}.
\newblock \bibinfo{journal}{Computational Statistics \& Data Analysis}
  \bibinfo{volume}{167}, \bibinfo{pages}{107379}.
\newblock \DOIprefix\doi{10.1016/j.csda.2021.107379}.

\end{thebibliography}

\end{document}